\def\BibTeX{{\rm B\kern-.05em{\sc i\kern-.025em b}\kern-.08em
    T\kern-.1667em\lower.7ex\hbox{E}\kern-.125emX}}
\tikzstyle{container} = [draw, rectangle, inner sep=1.5cm]
  \tikzset{main node/.style={circle,draw,minimum size=1cm,inner sep=0pt},}
\newif\ifuseboldmathops
\newif\ifuseittextabbrevs
	\newcommand{\ie}{{\it i.e.}}
	\newcommand{\ie}{i.e.}
\newcommand{\truev}{\mathsf{true}}
\newcommand{\Always}{\Box \, }
\newcommand{\Eventually}{\Diamond \, }
\newcommand{\until}{\mbox{$\, {\sf U}\,$}}
\newcommand{\abs}[1]{\lvert#1\rvert}
\newcommand{\dist}[1]{\mathcal{D}(#1)}
\newcommand{\supp}{\mbox{Supp}}
\newcommand{\calF}{\mathcal{F}}
\newcommand{\calAP}{\mathcal{AP}}
\newcommand{\last}{\mathsf{Last}}
\newcommand{\win}{\mathsf{Win}} 
\acrodef{mdp}[MDP]{Markov Decision Process}
\acrodef{pomdp}[POMDP]{Partially Observable Markov Decision Process}
\theoremstyle{definition}
 \newtheorem{definition}{Definition}
 \newtheorem{example}{Example}
\newtheorem{problem}{Problem}
\newtheorem{remark}{Remark}
\newtheorem{lemma}{Lemma}
 \newtheorem{theorem}{Theorem}
\newcommand{\hgame}{\mathcal{HG}}
\newcommand{\pre}{\mathsf{Pre}}
\newcommand{\calA}{\mathcal{A}}
\newcommand{\game}{\mathcal{G}}
\acrodef{dfa}[DFA]{Deterministic Finite-State Automaton}
\acrodef{scltl}[scLTL]{syntactically co-safe LTL}
\acrodef{ltl}[LTL]{Linear Temporal Logic}
\newcommand{\acc}{\mathfrak{F}}
\newcommand{\type}{\mathsf{type}}
\newcommand{\mask}{\mathsf{mask}}
\newcommand{\decoys}{\mathsf{d}}
\newcommand{\hyperts}{\mathsf{HTS}}
\newcommand{\turn}{\mathsf{t}}
\newcommand{\goals}{\mathsf{t}}
\newcommand{\asurewin}{\mathsf{asw}} 
\newcommand{\nwstate}{\mathsf{NW}}
\newcommand{\nwstates}{\mathsf{NWs}}
\newcommand{\hosts}{\mathsf{Hosts}}
\newcommand{\servs}{\mathsf{Servs}}
\newcommand{\serv}{\mathsf{Serv}}
\newcommand{\creds}{\mathsf{Creds}}
\newcommand{\safe}{\mathsf{safe}}
\newcommand{\cosafe}{\mathsf{cosafe}}
\newcommand{\level}{\mathsf{level}}
\newcommand{\surewin}{\mathsf{sw}}
\newcommand{\commspec}{\neg \varphi_2}
\newcommand{\hidspec}{\psi}
\title{Secure-by-synthesis network with active deception and temporal logic specifications\\
  \thanks{This material is based upon work supported by the Defense Advanced Research Projects Agency (DARPA) under Agreement No. HR00111990015.}
  \thanks{\IEEEauthorrefmark{2} Huan Luo is a visiting student with Dr. Jie Fu at the Worcester Polytechnic Institute from Sept to Nov, 2019.}
}
\author{
\IEEEauthorblockN{Jie Fu\IEEEauthorrefmark{1}, Abhishek N. Kulkarni \IEEEauthorrefmark{1}, Huan Luo\IEEEauthorrefmark{2}, Nandi O. Leslie
   \IEEEauthorrefmark{3},
   and Charles A. Kamhoua 
 \IEEEauthorrefmark{3}}

\IEEEauthorblockA{\IEEEauthorrefmark{1} \IEEEauthorrefmark{2}Dept. of Electrical and Computer Engineering,\\
Robotics Engineering Program,\\
Worcester Polytechnic Institute, MA, US\\
 \IEEEauthorrefmark{1}\emph{jfu2, ankulkarni@wpi.edu},
\IEEEauthorrefmark{2}\emph{hluo12@126.com}}
  \IEEEauthorblockA{\IEEEauthorrefmark{3}  U.S. Army Research Laboratory, MD, US\\
  \emph{charles.a.kamhoua.civ, nandi.o.leslie.ctr@mail.mil}}
}
\begin{document}
\maketitle
\thispagestyle{plain}
\pagestyle{plain}
 
\nomenclature[1]{$G$}{Transition system for turn-based game arena.}
\nomenclature[1]{$S_i$}{Player $i$'s states in game arena.}
\nomenclature[1]{$A_i$}{Player $i$'s actions in game arena.}
\nomenclature[1]{$T$}{Transition function in the arena.}
\nomenclature[2]{$\calAP$}{A set of atomic propositions.}

\nomenclature[1]{$L_i$}{Player $i$'s perceived labeling function.}
\nomenclature[3]{$\pi^\surewin_i$}{Player $i$'s sure-winning strategy.}

\nomenclature[3]{$\pi^\asurewin_i$}{Player $i$'s almost-sure-winning strategy.}
\nomenclature[2]{$\Sigma=2^\calAP$}{Powerset of atomic propositions, also called the alphabet.}
\nomenclature[2]{$\Sigma^\omega$}{A set of words with infinitely many symbols in $\Sigma$.}
\nomenclature[2]{$\Sigma^\ast$}{A set of words with finitely many symbols in $\Sigma$.}
\nomenclature[4]{$\hgame^k$}{level-$k$ Hypergame.}
\nomenclature[4]{$\hyperts$}{Hypergame transition system.}
\nomenclature[5]{$s$}{A state in the game arena.}
\nomenclature[5]{$q$}{A state in an \ac{dfa}.}
\nomenclature[5]{$v$}{A state in the hypergame transition system.}
\nomenclature[3]{$\win_i^j$}{A set of winning states of player $i$ perceived by player $j$.}
\nomenclature[4]{$\hyperts\slash_{\pi_i}$}{Policy $\pi_i$-induced subgame.}

\begin{abstract}
This paper is concerned with the synthesis of strategies in network systems with active cyber deception. Active deception in a network employs decoy systems and other defenses to conduct defensive planning against the intrusion of malicious attackers who have been confirmed by sensing systems. In this setting, the defender's objective is to ensure the satisfaction of security properties specified in temporal logic formulas. We formulate the problem of deceptive planning with decoy systems and other defenses as a two-player games with asymmetrical information and Boolean payoffs in temporal logic. We use level-2 hypergame with temporal logic objectives to capture the incomplete/incorrect knowledge of the attacker about the network system as a payoff misperception. The true payoff function is private information of the defender. Then, we extend the solution concepts of $\omega$-regular games to analyze the attacker's rational strategy given her incomplete information. By generalizing the solution of level-2 hypergame in the normal form to extensive form, we extend the solutions of games with safe temporal logic objectives to decide whether the defender can ensure security properties to be satisfied with probability one, given any possible strategy that is perceived to be rational by the attacker. Further, we use the solution of games with co-safe (reachability) temporal logic objectives to determine whether the defender can engage the attacker, by directing the attacker to a high-fidelity honeypot. 
The effectiveness of the proposed synthesis methods is illustrated with synthetic network systems with honeypots. 
\end{abstract}

  \printnomenclature

\section{Introduction} 
In networked systems, many vulnerabilities may remain in the network even after being discovered, due to the delay in applying software patches and the costs associated with removing them. In the presence of such vulnerabilities, it is critical to design network defense strategies that ensure the security of the network system with respect to complex, high-level security properties. 
In this paper, we consider the problem of automatically synthesizing defense strategies capable of active deception that satisfy the given  security properties with probability one. We represent the security properties using temporal logic, which is a rich class of formal languages to specify correct behaviors in a dynamic system \cite{manna2012temporal}. 
For instance, the property, ``the privilege of an intruder on a given host is \textit{always} lower than the root privilege,'' is a safety property that asserts that the property must be true at all times. The property, ``it is the \textit{always} the case that \textit{eventually} all critical hosts will be visited,'' is a liveness property, which states that something good will always eventually happen. 


In the past, formal verification, also known as model checking, has been employed to verify the security properties of network systems, expressed in temporal logic \cite{Jha2002,ning2004techniques,Ramos2017}. 
These approaches construct a transition system that captures all possible exploitation by (malicious or legitimate) users in a given network.  
Then, a verification/model checking tool is used to generate an attack graph as a compact representation of all possible executions that an attacker can exploit to violate safety and other critical properties of the system. By construction,  the attack graph captures multi-step attacks that may exploit not only an individual vulnerability but also multiple vulnerabilities and  their causal dependencies.  Using an attack graph, the system administrator can perform analysis of the risks either offline or at run-time. 

However,  verification and  risk analysis with attack graphs have limitations: they do not take into account the possible defense mechanisms that can be used online by a security system during an active attack. For example, once an attacker is detected, the system administrators may change the network topology online (using software-defined networking) \cite{liu2016leveraging} or activate decoy systems and files.  Given increasingly advanced cyber attacks and defense mechanisms, it is desirable to synthesize, from system specification, the defense strategies that can be deployed online against active and progressive attacks to ensure a provably secured network. 
Motivated by this need, we present a game-theoretic approach to synthesize reactive defense strategies with active deception. Active deception employs decoy systems and files in synthesizing proactive security strategies, assuming a malicious attacker has been detected by the sensing system \cite{underbrink2016effective}.  


Game theory has been developed to model and analyze defense deception in network security (see a recent survey in \cite{pawlickGametheoreticTaxonomySurvey2019}). Common models of games have been used in security include Stackelberg games, Nash games, Bayesian games, in which reward or loss functions are introduced to model the payoffs of the attacker and the defender. Given the reward (resp. loss) functions, the player's strategies are computed by maximizing (resp. minimizing) the objective function \cite{Jajodia2016,Cohen2006,ijcai2019-50}. 
Carrol and Grosu \cite{carrollGameTheoreticInvestigation2009} used a signaling game to study honeypot deception. The defender can disguise honeypots as real systems and real systems as honeypots. The attacker can waste additional resources to exploits honeypots or to determine whether a
system is a true honeypot or not. The solution of perfect Bayesian equilibrium provides the defend/attack strategies. Huang and Zhu \cite{al-shaerDynamicBayesianGames2019} used dynamic Bayesian games to solve for defense strategies with active deception. They considered both one-sided incomplete information, where the defender has incomplete information about the type of the attacker (legitimate user or adversary), and two-sided incomplete information, where the attacker also is uncertain about the type of the defender (high-security awareness or low-security awareness). Based on the analysis of  Nash equilibrium, the method enables the prediction of the attacker's strategy and proactive defense strategy to mitigate losses.

Comparing to the existing quantitative game-theoretic approach,  game-theoretic modeling, and qualitative analysis with Boolean security properties have not been developed for deception and network security. The major difference between quantitative and qualitative analysis in games lies in the definitions of the payoff function. Instead of minimizing loss/maximizing rewards studied in prior work,   the goal of qualitative reasoning is to synthesis a security policy that ensures, with probability one (\ie, almost-surely),  given security properties are satisfied in the network during the dynamic interactions between the defender and the attacker.

To synthesize provably secured networks with honeypot deception, we develop a game-theoretic model called ``$\omega$-regular hypergame'', played between two players: the defender and the attacker. A hypergame is a game of games, where the players play their individual perceptual game, constructed using the information available to each player. Similar to \cite{al-shaerDynamicBayesianGames2019}, we assume that the attacker has incomplete information about the game and is not aware of the deployment of decoy systems. However, both the defender and attacker are aware of each other's actions and temporal logic objectives (also known as $\omega$-regular objectives). Therefore, the defender and the attacker play different games, which together define the $\omega$-regular hypergame. 

To solve for active deception strategies in the $\omega$-regular hypergame, we first construct the game graphs of the individual games being played by the attacker and defender. In this context, a game is defined using three components: (a) a transition system, called arena, which captures all possible interactions over multiple stages of the game; (b) a labeling function that relates an outcome--a sequence of states in the game graph--to properties specified in logic; and (c) the temporal logic specifications as the players' Boolean objectives. The construction of the transition system is closely related to the attack graph as it  captures the causal dependency between the vulnerabilities in the network. The only difference is that in the attack graph analysis, the transitions are introduced by the attacker's moves only, whereas in the game transition system the transitions may be triggered by both attacker's and the defender's actions. 
Cyber deception is introduced through payoff manipulation: when the attacker does not know which hosts are decoys,   the attacker might misperceive herself to be winning if the security properties of the defender have been violated, when in fact, they are not. 
We define a hypergame transition system for the defender to synthesize 1) the rational, winning strategy of the attacker given the attacker’s (mis)perception of the game; and then 2) the deceptive security strategies for the defender that exploits the attacker's perceived winning strategy. 

The paper is structured as follows. In Sec.~\ref{sec:prelim}, we present the definition of $\omega$-regular games, and show how such a game can be constructed from a network system. In Sec.~\ref{sec:hypergame-modeling}, we formulate a modeling framework called ``cyber-deception $\omega$-regular game'' and show how to capture the use of decoys as a payoff manipulation mechanism in such a game. In Sec.~\ref{sec:synthesis}, we present the solution for the cyber-deception $\omega$-regular game with asymmetrical information. Using the solution of the game, a defender's strategy, if one exists, can be synthesized to ensure that the security properties are satisfied with probability one. This strategy uses both active deception and reactive defense mechanisms. Further, we analyze whether a strategy exists to ensure that the defender can achieve a preferred outcome, for example, forcing the attacker to visit a honeypot eventually. Finally, we use examples to illustrate the methods and effectiveness of the synthesized security strategy. In Sec.~\ref{sec:conclude} we conclude and discuss potential future directions.

 

\section{Preliminaries and Problem Formulation}
\label{sec:prelim}
 
 \noindent \textbf{Notation}: Given a set $X$, the set of all possible
 distributions over $X$ is denoted $\dist{X}$. For a finite set $X$,
 the powerset (set of subsets) of $X$ is denoted $2^X$. For any
 distribution $d\in \dist{X}$, the support of $d$, denoted $\supp(d)$,
 is the set of elements in $X$ that has a nonzero probability to be
 selected by the distribution, \ie, $\supp(d)=\{x\in X\mid
 d(x)>0\}$. Let $\Sigma$ be an alphabet, a sequence of symbols
 $w=\sigma_0 \sigma_1 \ldots \sigma_n $, where $\sigma_i \in \Sigma$,
 is called a \emph{finite word} and $\Sigma^\ast$ is the set of finite
 words that can be generated with alphabet $\Sigma$.  We denote
 $\Sigma^\omega$ the set of words obtained by concatenating the
 elements in $\Sigma$ infinitely many times. Notations used in this paper can be found in the nomenclature.

  \subsection{The game arena of cyber-deception game}
 Attack graph is a formalism for automated network security analysis. Given a network system, its mathematical model can be constructed as a finite-state transition system which includes a set of states describing various network conditions, the set of available actions that can be performed by the defender or the attacker to change the network conditions, their own states, and transition function that
 captures the pre- and post-conditions of actions given states. A pre-condition is a logical formula that needs to be satisfied for an action to be taken. A post-condition is a logical formula describes the effect of an action in the network system.  Various approaches to attack graph generation have been proposed (see a recent survey  \cite{AitMaalemLahcen2018}). 

 The attacker takes actions to exploit the network, such as remote
 control exploit, escalate privileges, and stop/start services on a
 host under attack.  When generating the attack graph, we also
 introduce a set of defender's actions, enabled by
 defensive software such as firewalls, multi-factor user
 authentication, software-defined networking to remove some
 vulnerabilities in real-time. After incorporating both defender's and
 attacker's moves, we obtain a two-player game arena in the form of a
 deterministic transition system in Def.~\ref{def:arena}. In this
 game, we refer the defender to be player 1, P1 (pronoun `he') and
 the attacker to be player 2, P2 (pronoun `she').
\begin{definition}
\label{def:arena}
A turn-based game arena consists of a tuple 
\[ G = \langle S,A, T , \calAP,  L \rangle,\] where:
\begin{itemize}
\item $S=S_1\cup S_2$ is a finite set of states partitioned into P1's states $S_1$ and P2's states $S_2$;
\item $A_1$ (resp., $A_2$) is the set of actions for P1 (resp., P2); 
\item $T : (S_1 \times A_1)\cup (S_2 \times A_2) \rightarrow S$ is a \emph{deterministic} transition function that maps  a state-action pair to a next state.
\item $\calAP$ is the set of atomic propositions.
\item $L: S\rightarrow 2^\calAP$ is the labeling function that maps
  each state $s\in S$ to a set $L(s)\subseteq \calAP$ of atomic
  propositions evaluated true at that state.
 \end{itemize}
\end{definition}
The set of atomic propositions and labeling function together enable
us to specify the security properties using logical formulas.  A
\emph{path} $\rho=s_0s_1\ldots $ in the game arena is a
(finite/infinite) sequence of states such that for any $i\ge 0$, there
exists $a\in A_1\cup A_2$, $\delta(s_i,a) = s_{i+1}$.
A path
$\rho=s_0s_1\ldots $ can be mapped to a word in $2^\calAP$,
$w=L(s_0)L(s_1)\ldots $, which is  evaluated against the
pre-defined security properties in logic.

In this work, we consider deterministic, turn-based game arena, \ie,  at any step, either the attacker or the defender takes an action and the outcome of that action is deterministic. This turn-based interaction can be understood as if the defense is \emph{reactive} against attacker's exploits. The turn-based interaction has been  adapted in cyber-security research \cite{chakraborty2018hybrid}. The generalization to concurrent stochastic game  arena is a part of our future work.

The model of  game arena is generic enough to capture many attack  graphs generated by different approaches. For example, consider the lateral movement attack, the state $s$ can include information about the source host of the attacker, the user credential obtained by the attacker, and the current network condition--the active hosts and services. Depending on the pre- and post- conditions of each vulnerability, the attacker can select a vulnerability to exploit.   For example, a vulnerability named ``IIS\_overflow'' requires the attacker to have a credential of user or a root, and the host running IIS webserver \cite{Sheyner2004}. After exploiting the vulnerability, the attacker moves from the source host to a target host, changes her credential and the network condition. For example, after the attacker exploits the vulnerability of ``IIS\_overflow'', the service will be stopped on the target host and the attacker gains the credential as a root user. Then the defender may choose to stop services on a host or to change the network connectivity, or to replace a current host with a decoy as defense actions. These defense actions again change the network status--resulting in a transition in the arena.
%
    
%
%


\subsection{The payoffs in the game} 
We consider that the defender's objective is to satisfy  security properties of the system, specified in temporal logic. The attacker's objective is to violate security properties of the system. We assume the security properties are common knowledge between the attacker and the defender, corresponding to the worst case assumption of the attacker. 
Next, we give the formal syntax and semantics of \ac{ltl} and then several examples related to network security analysis.

Let $\calAP$ be a set of atomic propositions. Linear Temporal Logic
(\ac{ltl}) has the following syntax,
\[ \varphi := \top \mid \bot \mid p \mid \varphi \mid \neg\varphi \mid \varphi_1 \land \varphi_2 \mid \bigcirc \varphi \mid \varphi_1 {\until} \varphi_2, \] where 
\begin{itemize}
    \item $\top,\bot$ are universally true and false, respectively.
    \item $p \in \calAP$ is an atomic proposition.
    \item $\bigcirc$  is a temporal operator called the ``next'' operator (see semantics below).
    \item $\until$ is a temporal operator called the ``until'' operator (see semantics below).
    \end{itemize}
    

Let $\Sigma\coloneqq 2^\calAP$ be the finite alphabet. 
Given a word $w\in \Sigma^\omega$, let $w[i]$ be the $i$-th element in the word and $w[i\ldots]$ be the subsequence of $w$ starting from the $i$-th element. For example, $w=abc$, $w[0]=a$ and $w[1\ldots] = bc$. Formally,
we have the following definition of the semantics: 
\begin{itemize}
    \item $w\models p$ if $p \in w[0]$;
    \item $w\models \neg p $ if $p \notin w[0]$;
    \item $w\models \varphi_1\land \varphi_2$ if $w \models \varphi_1$ and $w\models \varphi_2$.
    \item $w\models \bigcirc \varphi$ if $w[1\ldots] \models \varphi$.
    \item $w\models \varphi \until \psi$ if $\exists i \ge 0$, $w[i\ldots] \models \psi$ and $\forall 0\le j<i$, $w[j\ldots]\models \varphi$.  
\end{itemize}
From these two temporal operators ($\bigcirc, \until$), we define two
additional temporal operators: $\Eventually$ eventually and $\Always$
always. Formally, $\Eventually \varphi$ means $\truev \until
\varphi$. $\Always \varphi$ means $\neg \Eventually \neg \varphi$.
For details about the syntax and semantics of \ac{ltl}, the readers
are referred to \cite{Pnueli1989}.

Next, we present some examples of \ac{ltl} formulas for describing security properties in a network. Consider a set of atomic propositions $\calAP = \{p_1,p_2,p_3,p_4\}$, where
\begin{itemize}
    \item $p_1$: service A is enabled on host 1.
    \item $p_2$: the attacker has the root privilege on a host.
    \item $p_3$:  the attacker is on host $1$.
    \item $p_4$: the attacker is on host $2$.
\end{itemize}
Using the set of atomic propositions, the following security properties or attacker's objectives can be described:
\begin{itemize}
    \item   $\Always p_1$:  ``Service A will always be enabled on host 1''.
    \item $\Always \Eventually p_1$: ``It is always the case that service A will be eventually enabled on host 1.''
    \item  $\Eventually (p_3 \land p_2)$:  ``Eventually the attacker reaches host $1$ with a root privilege''.
    \item  $(p_1\land p_3\land p_2)\implies \bigcirc \neg p_1$: ``If the attacker is at host $1$ with a root access and service A is enabled, then at the next step she will stop the service A on host 1.''
    \item   $\Eventually(p_3\land \Eventually p_4)$: ``Eventually the attacker visits host $1$ and then visits host $2$.''
\end{itemize}
Using the semantics of \ac{ltl}, we can evaluate whether a word $w\in \Sigma^\omega$ satisfy a given formula. For example, $\emptyset \emptyset \{p_3,p_2\}$ satisfies the formula  $\Eventually (p_3 \land p_2)$.



In this work, we restrict to a subclass of \ac{ltl} called
syntactically safe and co-safe \ac{ltl} \cite{kupferman2001model},
which are closely related to  bad and good prefixes of languages:
Given an \ac{ltl} formula $\varphi$, a \emph{bad} prefix for $\varphi$
is a finite word that can not be extended in any way to satisfy
$\varphi$; a good prefix for $\varphi$ is a finite word that can
extended in any way to satisfy $\varphi$.  \emph{Safe LTL} is a set of
\ac{ltl} formulas for which any   infinite word that does not satisfy the formula has a
finite bad prefix. \emph{co-safe LTL} is the set of LTL formulas for
which any satisfying infinite word has a finite good
prefix. 

The advantage of restricting to safe and co-safe \ac{ltl} is that both
types of formulas can be represented by \ac{dfa}s with different
acceptance conditions.  A \ac{dfa} is a tuple
$\calA = (Q, \Sigma, \delta, I, (F, \type))$ which includes a finite
set $Q$ of states, a finite set $\Sigma =2^\calAP$ of symbols, a
deterministic transition function
$\delta: Q\times \Sigma \rightarrow Q$, and a unique initial state
$I$. The acceptance condition is specified in a tuple $(F, \type)$
where $F\subseteq Q$ and $\type \in \{\safe, \cosafe\}$. The intuition
is that states in a \ac{dfa} for an \ac{ltl} formula are in fact
finite-memory states to keep track of partial satisfaction of the
said formula \cite{esparza2016ltl}.

Given a word $w = \sigma_0\sigma_1\ldots \in \Sigma^\omega$, its
corresponding \emph{run} in the \ac{dfa} is a sequence of automata
states $q_0,q_1,\ldots $ such that $q_0=I$ and
$q_{i+1} = \delta(q_i, \sigma_i)$ for $i\ge 0$.  Different types of
\ac{dfa}s define different accepting conditions:
\begin{itemize}
    \item  when $\type=\safe$. A word $w$ is \emph{accepted} if its corresponding run only visits  states in $F$. That is, for all $i\ge 0$, $q_i \in F$.
    \item when $\type = \cosafe$. A word $w$ is \emph{accepted} if its corresponding run  visits  a state in $F$. That is, there exists $i \ge 0$, $q_i \in F$.
    \end{itemize} 
    A safe \ac{ltl} formula $\varphi$ translates to a \ac{dfa} with
    $\type = \safe$. A co-safe \ac{ltl} formula $\varphi$ translates
    to a \ac{dfa} with $\type =\cosafe$. We will present several examples of \ac{ltl} formulas and their \ac{dfa}s in Section~\ref{sec:hypergame-modeling} and Section~\ref{sec:experiment}.
    
 \begin{remark}
The restriction to these two types of acceptance conditions does not allow us to specify recurrent properties, for example, ``always eventually a service is running on host $i$.'' However, we can specify temporally extended goals and a range of safety properties in a network systems. The extension to  more complex specifications is also possible but requires different synthesis algorithms that deal with recurrent properties.
\end{remark}

Putting together the game arena and the payoffs of players, we can formally define $\omega$-regular games.

\begin{definition}[$\omega$-regular game]
An $\omega$-regular game with safe/co-safe objective is $\game =(G, (\varphi_1,\varphi_2))$ that includes a game arena $G$, and player $i$'s objectives expressed by safe/co-safe \ac{ltl} formulas.
\end{definition}
An $\omega$-regular game $\game$ is zero-sum if $\varphi_1 = \neg \varphi_2$--that is, the formula that P1 wants to satisfy is the negation of P2's formula. 

In an $\omega$-regular game, a \emph{deterministic strategy} is a function
$\pi_i: S^\ast \rightarrow A_i$ that maps a history to an action. A
\emph{ set-deterministic strategy} is a function
$\pi_i: S^\ast \rightarrow 2^{A_i}$ that maps a history to a subset
of actions among which player $i$ can select nondeterministically. A
\emph{mixed/randomized strategy} is a function
$\pi_i: S^\ast\rightarrow \dist{A_i}$ that maps a history into a
distribution over actions. If the strategy depends on the current state only, then we call the
strategy \emph{memoryless}.  A strategy is \emph{almost-sure winning} for player $i$ if and only if by committing to this strategy, no matter which strategy the opponent commits to, the outcome of their interaction satisfies the objective $\varphi_i$ of player $i$, with probability one.

The following result is rephrased from \cite{buchi1969solving}.
\begin{theorem}
  \label{thm:determined} \cite{buchi1969solving} A zero-sum turn-based
  $\omega$-regular game is determined, \textit{i.e.} for a given history, only one
  player (with a finite memory strategy) can win the game. 
\end{theorem}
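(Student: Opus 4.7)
The plan is to reduce the $\omega$-regular game to a game whose winning condition is purely a reachability or safety predicate on a finite product graph, and then prove determinacy of that simpler game by a constructive fixed-point argument. Since $\varphi_1 = \neg \varphi_2$ is either safe or co-safe LTL, I would first build the \ac{dfa} $\calA_{\varphi_i} = (Q,\Sigma,\delta,I,(F,\type))$ from Theorem-less preliminaries already in the excerpt, and form the synchronous product $G \otimes \calA_{\varphi_i}$ whose states are pairs $(s,q)$ and whose transitions combine $T$ on the $G$-component with $\delta(q,L(s))$ on the \ac{dfa}-component. On this product, the original temporal-logic winning condition collapses to either ``always stay in $S \times F$'' (safe case) or ``eventually hit $S \times F$'' (co-safe case). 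The product is still a finite deterministic turn-based arena, and the \ac{dfa}-state $q$ encodes exactly the finite memory that a strategy on $G$ will need.

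Next I would compute the winning regions on the product via the standard controlled-predecessor operator $\pre_i(X)$, defined on $(s,q)$ by: if it is player $i$'s turn, at least one successor lies in $X$; if it is the opponent's turn, all successors lie in $X$. For the safe case, iterate $X_{k+1} = (S\times F) \cap \pre_1(X_k)$ starting from $X_0 = S \times F$; by monotonicity and finiteness this stabilizes at the greatest fixed point $W_1^{\safe}$ of $X \mapsto (S\times F)\cap \pre_1(X)$, which is exactly P1's winning region for the safety objective. For the co-safe case, iterate $Y_{k+1} = (S\times F)\cup \pre_1(Y_k)$ from $Y_0 = S\times F$; this stabilizes at the least fixed point $W_1^{\cosafe}$, P1's winning region for reachability. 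In both cases a memoryless witness strategy $\pi_1$ on the product is read off directly: at each P1-state in the winning region, pick any action whose successor remains in the relevant fixed point; lifted back to $G$ via the \ac{dfa}-component, this is a finite-memory strategy on $G$.

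The determinacy half then requires showing that on the \emph{complement} of $W_1$, P2 has a finite-memory strategy to enforce $\neg \varphi_1 = \varphi_2$. Here I would exploit the duality of the fixed-point definitions: for safety, the complement of the greatest fixed point equals the least fixed point of the dual operator $Y \mapsto (S\times \overline F) \cup \pre_2(Y)$, which is exactly P2's reachability region to ``bad'' \ac{dfa}-states; by the symmetric construction P2 gets a memoryless product strategy that eventually forces a violation, making the word inadmissible. The analogous dual argument handles the co-safe case. Combining the two halves proves that the product state space is partitioned into $W_1$ and $W_2$ and that each player wins with a memoryless strategy on the product, hence a finite-memory strategy on $G$ with memory bounded by $|Q|$.

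The main obstacle I anticipate is not the fixed-point computation itself, which is entirely routine once the product is set up, but rather the duality step: verifying that $\overline{W_1}$ is closed under P2-controllable predecessors requires a careful case split on whose turn it is at a boundary state, and it is easy to mis-state the controlled predecessor when the turn-based structure on $G$ is lifted to the product (the turn is inherited from the $S$-component, not the $Q$-component). Once that bookkeeping is pinned down, the rest of the argument is a direct application of the Knaster–Tarski theorem on the finite lattice $2^{S\times Q}$, and the finite-memory bound $|Q|$ follows automatically from the product construction.
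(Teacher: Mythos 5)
Your proposal is correct, but it takes a genuinely different route from the paper: the paper offers no proof of this theorem at all, simply citing the B\"uchi--Landweber result, which establishes finite-memory determinacy for arbitrary $\omega$-regular (Muller) winning conditions by a considerably deeper argument. Your proof is an elementary, self-contained alternative that works precisely because the paper's own definition of ``$\omega$-regular game'' restricts the objectives to safe/co-safe LTL: after the product with the DFA, the condition collapses to safety or reachability on a finite turn-based graph, where determinacy follows from the complementation of the attractor (least fixed point) against the safe region (greatest fixed point), and memoryless product strategies lift to finite-memory strategies on $G$ with memory bounded by $|Q|$. This is exactly the machinery the paper itself deploys later for strategy synthesis (its attractor and safety algorithms in the appendix), so your argument has the virtue of proving the determinacy claim with the same tools the paper already needs, at the cost of not covering the full $\omega$-regular class that the cited theorem handles (e.g., recurrence conditions such as $\Always\Eventually p$, which the paper explicitly excludes). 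Two bookkeeping points to tighten: the DFA component should be updated on the label of the \emph{successor} state, $\delta(q, L(s'))$, to match the paper's product conventions; and since the negation of a co-safe formula is safe (and vice versa), the two cases of your argument are in fact dual to one another, so only one fixed-point computation and its complement are needed to obtain both players' winning regions simultaneously.
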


\section{Modeling: A hypergame for cyber-deception}
\label{sec:hypergame-modeling}

When decoy systems are employed, the game between the attacker and defender is a game with asymmetric, incomplete information. 
In such a game, at least one player has privileged information over
other players. In active cyber-deception with decoys, the defender has
correct information about honeypot locations. We employ hypergame,
also known as game of games, to capture the defender/attacker
interaction with asymmetric information.

\begin{definition}\cite{bennett1980hypergames,Vane}
  A level-1 two-player  hypergame is a pair
  \[ \hgame^1 = \langle \game_1, \game_2 \rangle, \] where
  $\game_1,\game_2$ are games perceived by players P1 and P2,
  respectively.  A level-2 two-player hypergame is a pair
  \[\hgame^2= \langle \hgame^1, \game_2 \rangle\] where P1 perceives the
  interaction as a level-1 hypergame and P2 perceives the interaction
  as game $\game_2$.
\end{definition}
    In general, if P1 computes his strategy
  using $m$-th level hypergame and P2 computes her strategy using
  an $n$-th level hypergame with  $n < m$, then the resulting hypergame is said
  to be a level-$m$ hypergame given as
  \[\hgame^m = \langle \hgame^{m-1}_1, \hgame^n_2 \rangle. \]

  We refer to  the game perceived by player $i$ as the \emph{perceptual game}
  of player $i$.

  In active cyber-deception, the defender uses decoys to induce one-sided misperception  of attacker. We introduce the following
  function, called \emph{mask}, to model the resulting misperception
  of the attacker:
\begin{definition}[Mask]
  Given the set of symbols $\Sigma$, P2's perception of $\Sigma$ is given by a
  \emph{parameterized mask} function
  $\mask: \Sigma\times \Theta \rightarrow \Sigma$ where for each
  $\sigma \in \Sigma$ and a given parameter $\theta\in \Theta$, P2
  perceives $\mask(\sigma,\theta)$. We say that two symbols $\sigma_1, \sigma_2 \in \Sigma$ are
  \emph{observation-equivalent} for P2, if
  $\mask(\sigma_1,\theta) = \mask(\sigma_2,\theta)$.  Let
  $[\sigma]_{\theta} = \{\sigma' \in \Sigma\mid \mask(\sigma,\theta) =
  \mask(\sigma',\theta)\} $. 
\end{definition}
Note that the set of observation-equivalent states partitions
  the set $\Sigma = \bigcup_{\sigma\in \Sigma} [\sigma]_{\theta}$ due to transitivity in this equivalence relation. We use a simple example to illustrate this definition.
  
\begin{example}Suppose there is an atomic proposition $p_5$: ``host
  $1$ is a decoy.'' and P2 cannot observe the valuation of the
  proposition, then P2 is unable to distinguish a regular host from a
  decoy host. The parameter $\theta$ can be a set of hosts in the
  network where decoys are placed.
\end{example}
  
The inclusion of parameter $\theta$ will allow us to define different
misperceptions.  Given P1's labeling function and
P2's labeling function $L_2: S\rightarrow \Sigma$, the misperception
of P2 given the labeling function can be represented as
$L_2 (s) = \mask(L_1(s),\theta)$, for each $s\in S$. Slightly abusing
the notation, we denote the function $L_2 = \mask(L_1, \theta)$. In
this paper, we consider a fixed mask function, which means the
parameter $\theta$ is fixed. Thus, we omit $\theta$ from the mask
function. It is noted that the optimal selection of $\theta$ 
is the problem of mechanism design of the hypergame (c.f. \cite{Sandholm99distributedrational}), which is beyond the scope of this paper.

\begin{definition}
\label{def:level2}
Given the mask function $\mask$, the interaction between players is a level-2 hypergame in which
  P1 has complete information about the labeling function $L_1= L:S\rightarrow 2^\calAP$ in the arena
  $G$ and P2 has a misperceived labeling function in  
  $G$. In addition, P1 knows P2's misperceived labeling function. The level-2 hypergame is a tuple
  
\[
\hgame^2 = \langle \hgame^1, \game_2\rangle,
\]
where $\hgame^1 = \langle \game_1,\game_2\rangle$ is a level-1
hypergame with
\[
\game_1 = \langle G_1 = \langle S, A, T, \calAP, L_1 \rangle, (\varphi_1, \varphi_2) \rangle
\]
and 
\[ \game_2= \langle G_2 = \langle S, A, T, \calAP, L_2 = \mask(L_1,
  \theta) \ne L\rangle, (\varphi_1,\varphi_2)\rangle,
\]
where $\varphi_i$ is player $i$'s  objective in temporal logic, for $i=1,2$.
\end{definition}

In this work, we consider the following payoffs for players.

\begin{itemize}
\item The attacker's objective is given by a co-safe \ac{ltl}  formula $\varphi_2$. For example, $\varphi_2$ can be ``eventually visit host 1.''
\item The defender's objective is given by a preference
  $\commspec \land \hidspec \succ \commspec$ where $\hidspec$ is a
  co-safe \ac{ltl} formula and $\succ$ is the operator for ``is strictly preferred to.'' That is, P1 prefers to prevent P2 from
  achieving her objective \emph{and} to satisfy a hidden objective. If
  not feasible, then P1 is to satisfy the objective $\commspec$, which
  is the negation of P2's specification and known to P2. 
  \end{itemize}
  
 For example, P1's objective $\varphi_1$ could be ``always stop attacker from reaching host 1'', while the additional objective $\psi$ could be ``eventually force the attacker to visit a decoy''.

In level-2 hypergame, P1's strategy is influenced by his perception of
P2's perceptual equilibrium. Since P2 plays a level-0 hypergame, her
 perceptual equilibrium is the solution of $\game_2$. Player 1 should leverage the weakness in player 2's strategy to achieve better outcomes concerning his objective. To this end, we aim to solve the following qualitative planning problem with cyber-deception using decoys.
\begin{problem}
  Given a level-2 hypergame in Def.~\ref{def:level2}, synthesize a strategy for P1, if exists, such that \emph{no matter which} equilibrium P2 adopts in her perceptual game, P1 can ensure to satisfy his most preferred logical objective \emph{with probability one}, \ie, \emph{almost surely.}
\end{problem}

\section{Synthesis of deceptive strategies}
\label{sec:synthesis}
In this section, we present the synthesis algorithm for deceptive strategies. First, we show that when temporal logic specifications are considered, the defender requires finite memory to monitor the history (a state sequence) with respect to the partial satisfaction of given defender's and attacker's objectives. This construction of finite-memory states and transitions between these states is given in Sec.~\ref{sec:finite-memory}. Second, we construct a hypergame transition system in Sec.~\ref{sec:hypergame} with which the planning problem for the defender reduces to solving games with safe and co-safe objectives. Third, we compute the set of rational strategies of the attacker given her perceptual game in Sec.~\ref{sec:perceptualgame-P2}. Lastly in Sec.~\ref{subsec:deceptive_win}, we show how to synthesize the deceptive winning strategy for the defender, assuming that the attacker commits to an arbitrary strategy perceived to be rational by herself. The complexity analysis is given in Sec.~\ref{subsec:complexity}.



\subsection{Monitoring the history with finite memory of P1}
\label{sec:finite-memory} 
To design a deceptive strategy, P1 needs to monitor the
history of states with respect to both players' objectives--that is, maintaining the evolution of some finite-memory states. Thus, we first introduce  a
product using the \ac{dfa}s for $\hidspec$ and $\varphi_2$. The states of this product constitutes this set of ``finite-memory states''.  Later, we use an example to illustrate this construction. 

\begin{definition}
  \label{def:product}
  Given two complete\footnote{A \ac{dfa}
    $\calA = \langle Q,\Sigma, \delta, I, (F,\cosafe) \rangle $ is
    complete if for any symbol $\sigma\in \Sigma$, for any state
    $q\in Q$, $\delta(q,\sigma) $ is defined. An incomplete \ac{dfa}
    can always be made complete by adding a non-accepting sink state
    and redirecting all undefined transitions to the sink. } \ac{dfa}s,
  $\calA_{1}= \langle Q_1,\Sigma, \delta_1, I_1, (F_1,
  \cosafe)\rangle$--the defender's hidden co-safe  \ac{ltl} objective and
  $\calA_2 = \langle Q_2,\Sigma, \delta_2, I_2,
  (F_2,\cosafe))\rangle$--the attacker's  co-safe \ac{ltl} objective
  $\varphi_2$ and the mask function $\mask:\Sigma \rightarrow \Sigma$, the
  \emph{product automaton} given P2's misperception  $\calA_1 \otimes \calA_2$ is a \ac{dfa}:
  \[
    \calA =\langle Q, \Sigma, \delta, I,(\calF_{1}, \cosafe), (\calF_2,\cosafe)\rangle,
  \]
  where:
  \begin{itemize}
  \item $Q = Q_1\times Q_2$ is the state space.
    \item $\Sigma$ is the  alphabet.
    \item $\delta: Q\times \Sigma \rightarrow Q$ is defined as
      follows: Let $(q_1,q_2), (q_1',q_2') \in Q$,
      $\delta((q_1,q_2),\sigma)= (q_1',q_2')$ if and only if
      \begin{itemize}
          \item   $\delta_1(q_1,\sigma )=q_1'$  and
          \item  there exists $\sigma'$ such
      that $\delta_2(q_2, \sigma') = q_2'$ and
      $\mask(\sigma)= \mask(\sigma')$--that is, $\sigma'$ is observation-equivalent to $\sigma$ from P2's viewpoint.
      \end{itemize}
            \item $I=  (I_1,I_2)$.
   \item $(\calF_1 ,\cosafe)= F_1\times Q_2$.
  \item $(\calF_{2},\cosafe)= Q_1\times F_2 $. 
  \end{itemize}
  \end{definition}
  The product computes the transition function using the union
  of \ac{dfa}s $\calA_1$ and
  $\calA_2$, while considering the observation-equivalent classes of symbols under the mask function. It maintains two acceptance conditions for accepting the languages for 
  $\calA_1$ and $\calA_2$, respectively.
  In fact, given the first acceptance condition, $\calA = \langle Q, \Sigma, \delta, I,(\calF_{1}, \cosafe) \rangle $ accepts the same language as \ac{dfa} $\calA_1$. Given the second acceptance condition,  $\calA = \langle Q, \Sigma, \delta, I,(\calF_{2}, \cosafe) \rangle $ 
  accepts a set $L$ of words such that $w=\sigma_0\sigma_1\ldots \sigma_n \in L$ if there exists a word $w'  =\sigma_0'\sigma_1'\ldots \sigma_n'$,  where $\mask(\sigma_i)=\mask(\sigma_i')$ for $i=0,\ldots, n$, that is accepted by \ac{dfa} $\calA_2$.

  It can be proven that the transition is deterministic. 
  
  \begin{lemma}
  \label{lma:determinism}
  If $\delta((q_1,q_2),\sigma)=(q_1',q_2')$ is defined, then there exists only one $\sigma'$ such that $\delta(q_2,\sigma')=q_2'$ and $\mask(\sigma)=\mask(\sigma')$.
  \end{lemma}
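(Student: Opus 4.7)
The plan is to establish uniqueness by a direct contradiction, leveraging the determinism of $\calA_2$ and the fact that $\sigma$ itself is always an admissible witness to the existential condition in Def.~\ref{def:product}. First, I would unpack the hypothesis $\delta((q_1,q_2),\sigma) = (q_1',q_2')$ via the product construction: it guarantees at least one symbol $\sigma'$ with $\delta_2(q_2,\sigma') = q_2'$ and $\mask(\sigma') = \mask(\sigma)$. Since $\mask(\sigma) = \mask(\sigma)$ trivially, $\sigma$ itself satisfies the existential condition, and determinism of the complete \ac{dfa} $\calA_2$ pins $\delta_2(q_2,\sigma)$ down to a unique state. Hence $q_2' = \delta_2(q_2,\sigma)$ and $\sigma' = \sigma$ is already one valid witness.

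To prove this is the only witness, I would suppose for contradiction a second $\sigma'' \neq \sigma$ with $\mask(\sigma'') = \mask(\sigma)$ and $\delta_2(q_2,\sigma'') = q_2'$. The main step is to argue that the product transition $\delta((q_1,q_2),\sigma)$ is indexed by the specific input symbol $\sigma$, so the ``witness'' role of $\sigma'$ in the definition collapses onto $\sigma$ itself; any other $\sigma''$ would correspond to a different product transition $\delta((q_1,q_2),\sigma'')$ rather than the one assumed in the hypothesis. Combined with the determinism of $\calA_1$ (which fixes $q_1'$) and the identification $q_2' = \delta_2(q_2,\sigma)$, this forces $\sigma'' = \sigma$, yielding the desired contradiction.

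The main obstacle will be making the indexing argument fully rigorous when $\delta_2(q_2,\cdot)$ is not injective on the observation class $[\sigma]_\theta$, because two distinct observation-equivalent symbols could a priori both drive $q_2$ to the same $q_2'$ in $\calA_2$ under the bare definition of determinism. I expect the proof to close this gap by carefully separating the ``trigger'' role of $\sigma$ (the fixed symbol that labels the product transition in the hypothesis) from the ``witness'' role of $\sigma'$ (a placeholder in the transition condition), and by using that the former is pinned by the hypothesis while the latter is therefore forced, via $q_2' = \delta_2(q_2,\sigma)$ and the determinism of $\calA_2$, to equal $\sigma$.
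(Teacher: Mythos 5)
There is a genuine gap, and it sits exactly where you predicted it would. The crux of the lemma is that the second coordinate of the product transition is well defined even though Def.~\ref{def:product} only requires the \emph{existence} of some $\sigma'$ with $\mask(\sigma')=\mask(\sigma)$ and $\delta_2(q_2,\sigma')=q_2'$. Your ``indexing'' argument misreads that existential: a symbol $\sigma''\neq\sigma$ with $\mask(\sigma'')=\mask(\sigma)$ does \emph{not} correspond to a different product transition $\delta((q_1,q_2),\sigma'')$ --- by the definition it is another legitimate witness for the very same transition labeled $\sigma$. Consequently you cannot ``collapse the witness role onto $\sigma$ itself,'' and your intermediate claim $q_2'=\delta_2(q_2,\sigma)$ is circular: it already assumes $\delta$ is single-valued at $((q_1,q_2),\sigma)$, which is precisely what the lemma must establish. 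The dangerous case is two observation-equivalent symbols $\sigma',\sigma''$ with $\delta_2(q_2,\sigma')\neq\delta_2(q_2,\sigma'')$, which would make $\delta$ multivalued; nothing in your argument rules this out. (Note also that if two equivalent symbols drive $q_2$ to the \emph{same} state, the lemma's literal ``only one $\sigma'$'' claim is false while determinism is unharmed --- the statement is really about uniqueness of $q_2'$, as the surrounding text ``the transition is deterministic'' makes clear.)

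The missing idea, which is the entire content of the paper's (admittedly terse) proof, is a property of $\calA_2$ itself: P2's automaton is driven only by P2's \emph{perceived} labels, so its transition function cannot separate observation-equivalent symbols, i.e.\ $\mask(\sigma')=\mask(\sigma'')$ implies $\delta_2(q_2,\sigma')=\delta_2(q_2,\sigma'')$ (``$\sigma'=\sigma''$ in $\calA_2$'' in the paper's wording). With that in hand, any two witnesses for the transition on $\sigma$ necessarily yield the same $q_2'$, and determinism of the product follows immediately from determinism of $\calA_1$ and $\calA_2$. Your proposal names the obstacle (non-injectivity of $\delta_2(q_2,\cdot)$ on $[\sigma]_\theta$ --- in fact the real threat is the opposite, non-constancy) but the trigger-versus-witness distinction does not close it; only the invariance of $\delta_2$ under $\mask$ does.
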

  The proof is included in Appendix.
  
  It is noted that a state in a \ac{dfa} captures a subset of sub-formulas that have been satisfied given the input to reach the state from the initial state \cite{esparza2016ltl}.   Intuitively, given an input word $w$,  $\delta((I_1,I_2),w) = (q_1,q_2)$ captures 1)   P1's knowledge about the  true logical properties satisfied by reading the input; 2) P1's knowledge about what logical properties that P2 thinks have been satisfied by reading the input. Next, we use examples to illustrate the product definition.
 

\begin{example} 
\label{ex:simple-specs} Consider the following example of players' objectives:
\begin{itemize}
    \item the  attacker's \emph{co-safe objective} is given by $\varphi_2\coloneqq \Eventually \goals$ for eventually reaching a set of goal states, labeled $\goals$,  in the game graph. A goal state can be that the attacker reaches a critical host.  
    \item the defender's \emph{objective} known to the attacker is
      $\commspec $, which is satisfied when the attacker is confined
      from visiting any goal state.
    \item  the defender's \emph{hidden, co-safe objective} is $ \hidspec \coloneqq \Eventually \decoys$, which is satisfied when the attacker reaches a honeypot labeled $\decoys$. In this co-safe objective, the defender is to lead the attacker into a honeypot. The attacker's activities at high-fidelity honeypots can be analyzed for understanding the intent and motives of the attacker.
\end{itemize} 
Given the task specification, we generate \ac{dfa}s $\calA_1, \calA_2$
in Fig.~\ref{fig:ex-specs}. The set $\calAP =\{\goals, \decoys\}$
where $\goals \coloneqq $ ``the current host is a target"; and
$\decoys \coloneqq$ ``the current host is a decoy''. The alphabet
$\Sigma = 2^\calAP = \{\emptyset, \{\decoys\}, \{\goals\}, \{\decoys,
\goals\}\}$. The mask function is such that
$\mask(\{\decoys\}) = \emptyset$ and $\mask(\{\decoys, \goals\})= \{\goals\}$ In words, P2 does not know which host
is a decoy. The symbol $\top$ stands for universally true. Transition $q\xrightarrow{\top}q'$
means $q\xrightarrow{\top}q'$ for any $\sigma\in \Sigma$.
The product
automaton is given in Fig.~\ref{fig:sync_product}.  For example, the
transition from $(0,0)\xrightarrow{\{\goals, \decoys\}} (1,1)$ is
defined because
$\delta_1(0,\{\goals, \decoys\}) = \delta_1(0,\{\decoys\})=1$ and
$\mask(\{\goals, \decoys\}\}) = \{\goals\}$ and
 $\delta_2(0,\{\goals\}) = 1$.  The reader can verify that $(0,0)\xrightarrow{\{\decoys\}} (1,0)$ is defined because $\mask(\{\decoys\})=\emptyset$.

 The
       defender's co-safe objective $ \calF_{1}= \{(1,0),(1,1)\}$
       and the attacker's co-safe objective
       $\calF_{2}=\{ (1,1),(0,1)\}$.
       If the state in $\calF_1$ is visited, then the defender knows that the attacker visited a decoy. The state $\{1,1\}$ lies in the intersection of $\calF_1$ and $\calF_2$. If this state is visited, then the defender knows that the attacker visited a decoy and the attacker wrongly thinks that she has reached a critical host. 
  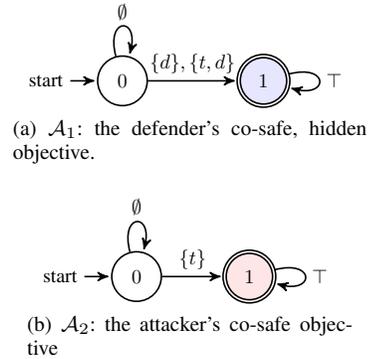
\begin{figure}[h]
  \centering
 \subfloat[$\calA_{1}$: the defender's co-safe, hidden objective.]{
              \begin{tikzpicture}[->,>=stealth',shorten >=1pt,auto,node distance=2.5cm,
                            semithick, scale=0.75, transform shape]
        \tikzstyle{every state}=[fill=white]
        \node[initial,state]   (0)                      {$0$};
        \node[state,accepting, fill=blue!10]           (1) [right of=0]   {$1$};
        \path[->]   (0) edge              node        {$\{d\},\{t,d\}$}       (1)
        (0) edge [loop above] node        {$\emptyset$}       (0)
                        (1) edge [loop right] node {$\top$} (1);    \end{tikzpicture}
         \label{fig:p1task_preferred}}\qquad\\
    \subfloat[$\calA_2$: the attacker's co-safe objective]{
     \begin{tikzpicture}[->,>=stealth',shorten >=1pt,auto,node distance=2cm,
                            semithick, scale=0.75, transform shape]
        \tikzstyle{every state}=[fill=white]
        \node[initial,state]   (0)                      {$0$};
        \node[state,accepting, fill=red!10]           (1) [right of=0]   {$1$};
        \path[->]   (0) edge              node        {$\{t\}$}       (1)
        (1) edge [loop right] node {$\top$} (1)
                    (0) edge [loop above] node        {$\emptyset$}       (0);
    \end{tikzpicture}
         \label{fig:p2task}}
     \caption{Examples of \ac{dfa}s.}
     \label{fig:ex-specs}
   \end{figure}
   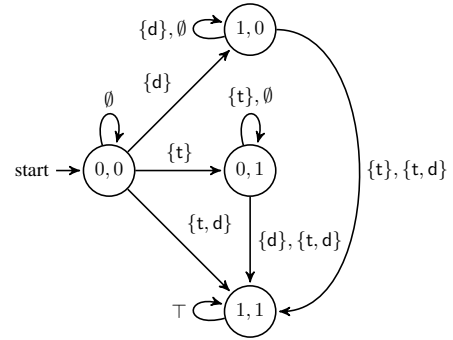
\begin{figure}[h]
     \centering

    \begin{tikzpicture}[->,>=stealth',shorten >=1pt,auto,node distance=2.5cm,
                            semithick, scale=0.75, transform shape]
        \tikzstyle{every state}=[fill=white]
        \node[initial,state]   (00)                      {$0,0$};
        \node[state]           (01) [  right of=00]   {$0,1$};
        \node[state]           (10) [above   of=01]   {$1,0$};
        \node[state]           (11) [ below   of=01]   {$1,1$};
        \path[->]   (00) edge              node        {$\{\goals\}$}       (01)
        (00) edge              node        {$\{\decoys\}$}       (10)
                (00) edge   [loop above]           node        {$\emptyset$}       (00)

        (00) edge              node        {$\{\goals,\decoys\}$}       (11)
        (01) edge              node        {$\{\decoys\},\{\goals,\decoys\}$}       (11)
        (10) edge  [bend left = 90]           node        {$\{\goals\}, \{\goals,\decoys\}$}       (11)
        (01) edge [loop above] node {$\{\goals\},\emptyset$} (01)
        (10) edge [loop left] node {$\{\decoys\},\emptyset$} (10)
        (11) edge [loop left] node {$\top$} (11);    \end{tikzpicture}
     \caption{ \label{fig:sync_product} Example of the product automaton $\calA$.}
     \end{figure}
\end{example}

\subsection{Reasoning in hypergames on graphs}
\label{sec:hypergame}
To synthesize deceptive strategy for P1, we construct the following
transition system, called \emph{hypergame transition system}, for P1
to keep track of the history of interaction, the partial satisfaction
of P1's safe and co-safe objectives given the history, as well as what
P1 knows about P2's  perceived partial satisfaction of her co-safe
  objective. 

\begin{definition}
  \label{def:hypergameTS}
  Given the product automaton
  $ \calA =\langle Q, \Sigma, \delta, I, (\calF_{1},\cosafe),
  (\calF_2,\safe)\rangle $, \ac{dfa}
  $\calA_2 = (Q_2,\Sigma_2, \delta_2, I_2, (F_2,\cosafe))$ for
  P2's co-safe objective, and the game arena
  $G=\langle S,A, T , \calAP, L \rangle$, let
  $L_1: S\rightarrow \Sigma$ be the labeling function of P1 and
  $L_2: S\rightarrow \Sigma$ be the labeling function perceived by
  P2. A hypergame transition system is a tuple
  \begin{multline*}
    \hyperts= \langle (S\times Q\times Q_2), A_1\cup A_2, \Delta, v_0,
   \acc_{1,\cosafe},\acc_{1,\safe}, \acc_{2} \rangle
\end{multline*}
where
\begin{itemize}
\item $V= S\times Q \times Q_2$ is a set of states. A state
  $v= (s,q,q_2)$ includes the state $s$ of the game arena and a state
  $q$ from the product automaton $\calA$ and a state $q_2$ from the
  automaton $\calA_2$. The set of states $V$ is partitioned into $V_1 = S_1\times Q \times Q_2$ and $V_2= S_2\times Q \times Q_2$.
\item $A_1\cup A_2$ is a set of actions.

\item
  $\Delta: V \times (A_1\cup A_2)\rightarrow V$ is the transition function. For a given state $v = (s,q,q_2)$, 
  \[
    \Delta((s,q,q_2),a)=  (s', q',q_2'),
  \]
  where $s' = T(s, a)$, $ q' = \delta(q, L_1(s'))$ is the transition in the automaton $\calA$, $ q_2' = \delta_2(q_2, L_2(s'))$ is a transition in P2's \ac{dfa} $\calA_2$. That is, after reaching the new state $s'$, both P1 and P2 update their \ac{dfa} states to keep track of progress with respect to their specifications.
\item $v_0=(s_0, \delta(I, L_1(s_0)),\delta_2( I_2, L_2(s_0)))$ is the initial state.
  \item $\acc_{1,\cosafe}=(S \times  \calF_{1} \times Q_2)$ is a set of states such that if any state in this set is reached, then the defender achieves his hidden, co-safe objective.   
  \item $\acc_{1,\safe}=( S\times (Q\setminus \calF_2) \times Q_2)$ is a
    set of states such that if the game state is always within $\acc_{1,\safe}$ then the defender achieves his safety objective.
\item $\acc_2=(S\times Q\times F_2)$ is the set of states such that if any state in this set is reached, then the attacker achieves her co-safe objective.
  \end{itemize}
\end{definition}

To understand this hypergame transition system. Let's consider a finite sequence of states in the game arena:
\[
  \rho  = s_0s_1s_2\ldots s_n.
\]
From P2's perception, the labeling sequence is:
\[
  L_2(\rho) = L_2(s_0)L_2(s_1)L_2(s_2)\ldots L_2(s_n).
\]
This word $L_2(\rho)$ is evaluated against the formula $\varphi_2$ using the semantics of \ac{ltl} and then  state $q_2 = \delta_2(I_2, L_2(\rho))$ is reached. The perceived progress of P2 is tracked by P1.

From P1's perspective, the labeling sequence is
\[
  L_1(\rho) = L_1(s_0)L_1(s_1)L_1(s_2)\ldots L_1(s_n).\] P1 evaluates
the word $L_1(\rho)$ against two formulas: $\neg \varphi_2$ and
$\hidspec$ and compute $(q_1,q_2^*) = \delta((I_1,I_2),
L_1(\rho))$. The state $q_2^*$ can be different from $q_2$ because that P2's misperception introduces differences in the labeling functions
$L_1\ne L_2$.




 
Note that due to misperception,   players can be both winning for a given outcome in their respective perceptual games. Consider Example~\ref{ex:simple-specs}, if
there exists a state $s\in S$ labeled to be $L_1(s) = \emptyset$ and
$L_2(s) = \{\goals\}$--that is, the attacker misperceives a non-critical
host as her target, then a path $\rho \in S^\omega$, with
$L_1(s_i)=\emptyset$ for all $i\ge 0$, and $L_2(s_k)=\{\goals\}$ for
some $k\ge 0$ can be considered  to satisfy both the defender's and
attacker's objectives.

  \begin{example}
  \label{ex:simple_game}
  We use a simple game arena with five states, shown in
  Fig.~\ref{fig:arena_ex} to illustrate the construction of hypergame
  transition system. In this game arena, at a square state, P2 selects
  an action in the set $\{b_1,b_2,b_3\}$; at a circle state, P1 selects an
  action in the set $\{a_1,a_2\}$. The labeling functions for two
  players are given as follows:
    \[
    L_1(0)=L_1(1)=L_1(2)=\emptyset; L_1(3)=\{\goals\}; L_1(4) = \{\decoys\};
    \]
    \[
      L_2(0)=L_2(1)=L_2(2)=\emptyset; \quad L_2(3)=L_2(4)=\{\goals\}.
    \]
    Note that state 4 is labeled $\goals$ by P2 but $\decoys$ by P1. With this mismatch in the labeling functions, if  state 4 is reached, then the attacker exploits a
    high-fidelity honeypot, known to P1, but falsely believes that
    she has exploited a target host.
    
      \begin{figure}[ht]
           \centering
           \subfloat[An example of game arena.]{
                 \begin{tikzpicture}[->,>=stealth',shorten >=1pt,auto,node distance=2cm,
                            semithick, scale=0.65, transform shape,  square/.style={regular polygon,regular polygon sides=4}]
        \tikzstyle{every state}=[fill=white]
        \node[initial,state]   (0)                      {$0$};
        \node[square,draw]           (1) [above right  of=0]   {$1$};
                \node[square,draw]           (2) [below right  of=0]   {$2$};
                \node[state] (3) [right of =1] {$3$};
                \node[state] (4) [right of =2] {$4$};
        \path[->]   (0) edge              node    {$a_1$}           (1)
        (0) edge   node         {$a_2$}  (2)
        (1) edge [bend left] node {$b_3$} (0)
                    (1) edge node {$b_1$} (3)
                    (1) edge node {$b_2$} (4)
                    (2) edge node {$b_1$} (4)
                    (3) edge[loop right] node {} (3)
                    (4) edge[loop right] node {} (4);
    \end{tikzpicture} \label{fig:arena_ex}} \qquad
           \subfloat[A hypergame transition system.]{
                 \begin{tikzpicture}[->,>=stealth',shorten >=1pt,auto,node distance=2cm,
                            semithick, scale=0.65, transform shape,  square/.style={regular polygon,regular polygon sides=4}]
        \tikzstyle{every state}=[fill=white]
        \node[initial,state, fill=green!10]   (0)                      {$v_0$};
        \node[square,draw, fill=green!10]           (1) [above right  of=0]   {$v_1$};
                \node[square,draw, fill=green!10]           (2) [below right  of=0]   {$v_2$};
                \node[state] (3) [right of =1] {$v_3$};
                \node[state] (4) [right of =2,fill=blue!10] {$v_4$};
        \path[->]   (0) edge              node    {$a_1$}           (1)
        (0) edge   node         {$a_2$}  (2)
                            (1) edge [bend left] node {$b_3$} (0)
                    (1) edge node {$b_1$} (3)
                    (1) edge node {$b_2$} (4)
                    (2) edge node {$b_1$} (4)
                    (3) edge[loop right] node {} (3)
                    (4) edge[loop right] node {} (4);
    \end{tikzpicture} \label{fig:hyperts_ex}} \qquad\\
             \subfloat[States in the
           $\hyperts$.]{\adjustbox{raise=2pc}{
         \begin{tabular}{l|l}
         \toprule
         $v_0$ & (0,(0,0),0)\\
             $v_1$ & (1,(0,0),0)\\
                $ v_2$ & (2,(0,0),0)\\
                    $ v_3$ & (3,(0,1),1)\\   
                      $v_4$ & (4,(1,0),1)\\
                      \bottomrule
         \end{tabular}}}
\label{fig:ex_hts}
\caption{An example of game arena and the constructed hypergame
  transition system.  The states in the hypergame transition systems
  are renamed in Table (c) for clarity. }
     \end{figure}
     
     The hypergame transition system is shown in
     Fig.~\ref{fig:hyperts_ex} and constructed from game arena in Fig.~\ref{fig:arena_ex},  \ac{dfa}s $\calA$ in
     Fig.~\ref{fig:sync_product}, and $\calA_2$ in
     Fig.~\ref{fig:p2task}.  After pruning unreachable states, the
     hypergame transition system happened to share the same graph
     topology as the game arena.  Here are two examples to illustrate the
     construction: A transition from $v_1\xrightarrow{b_1}v_3$ is
     generated because of the transitions
     $1\xrightarrow{b_1}3$,
     $(0,0)\xrightarrow{L_1(3) =\{\goals\}}(0,1)$ in $\calA$, and
     $0\xrightarrow{L_2(3) = \{\goals \}}1$ in $\calA_2$.  A
     transition $v_2\xrightarrow{b_1}v_4$ is generated because of
     transitions $2\xrightarrow{b_1} 4$,
     $(0,0)\xrightarrow{L_1(4)=\{\decoys\}}(1,0)$ in $\calA$ and
     $0\xrightarrow{L_2(4)= \{\goals\}} 1$ in $\calA_2$.
     
     In the hypergame transition system, $\acc_2=\{v_3, v_4\}$
     --the set of states that if reached, then P2 believes that she
     has reached the target; $\acc_{1,\safe}=\{v_0, v_1,v_2, v_4\}$
     (shaded in blue and green)--the set of safe states that P1 wants
     the game to stay in; $\acc_{1,\cosafe}=\{v_4\}$(shaded in blue)--the
     set of states that P1 preferred to reach as a hidden, co-safe
     objective.
\end{example}

\subsection{P2's perceptual game and  winning strategy}
\label{sec:perceptualgame-P2}

In the level-2 hypergame, P1 will compute
P2's rational strategy, which is her perceived almost-sure winning
strategy. Then, based on the predicted behavior of P2, P1 can solve
his own almost-sure winning strategies for the safety objective
$\neg \varphi_2$ and the more preferred objective
$\commspec\land \hidspec$, respectively.

\begin{definition}
  \label{def:perceivedGame2}
  Given the \ac{dfa}
  $\calA_2 = (Q_2,\Sigma, \delta_2, I_2, (F_2,\cosafe))$ for P2, the
  game arena $G=\langle S,A, T , \calAP, L\rangle$ and P2's labeling
  function $L_2: S\rightarrow \Sigma$, the perceptual game of player 2
  is a tuple
  \begin{multline}
    \game_2= \langle (S\times Q_2), A_1\cup A_2, \Delta_2, (s_0,
    q_{2,0}), S\times F_2 \rangle
\end{multline}
where
\begin{itemize}
\item $ S \times Q_2$ is a set of states. 
\item $A_1\cup A_2$ is a set of actions.
\item
  $\Delta_2: (S\times Q_2) \times (A_1\cup A_2)\rightarrow (S\times
  Q_2)$ is the transition function. For a given state $ (s,q_2)$,
  \[
    \Delta_2((s,q_2),a)= (s', q_2'),
  \]
  where $s' = T(s, a)$, $ q_2' = \delta_2(q_2, L_2(s'))$ is a transition in P2's \ac{dfa} $\calA_2$.  
\item $(s_0, q_{2,0})$ with $q_{2,0}=\delta_2( I_2, L_2(s_0))$ is the initial state.
\item $S\times F_2$ is the set of states such that if any state in
  this set is reached, then the attacker perceives that she has
  achieved her co-safe objective.
  \end{itemize}
\end{definition}
  A path of the game graph $\game_2$ is an infinite sequence
  $z_0z_1z_k\ldots $ of states in $S\times Q_2$ such that
  $\Delta_2(z_k,a)= z_{k+1}$ for some $a\in A$ for all $k\ge 0$.  We
  denote the set of paths of $\game_2$ by $Path(\game_2)$.

Due to the determinacy (Thm.~\ref{thm:determined}), we can compute the
solution of this zero-sum game and partition the game states into two sets:
\begin{itemize}
\item P2's perceived winning states for P1:
  $\win_1^2 \subseteq S\times Q_2$, and
\item P2's perceived winning states for herself:
  $\win_2^2 = (S\times Q_2) \setminus \win_1^2$.
  \end{itemize}

  The superscript $2$ means that this solution is for P2's perceptual
  game.   The winning region $\win_2^2$ can be computed from the
  attacker computation, described in Alg.~\ref{alg:zielonka} in
  Appendix with input  game $\game_2$ with 
  $X_1 \coloneqq S_1\times Q_2$, $X_2\coloneqq S_2\times Q_2$,
  $A\coloneqq A_1$ and $B \coloneqq A_2$, $T\coloneqq \Delta_2$.

  Given the winning region, there exists more than one winning strategies that P2 can select. We classify the set of winning strategies for P2 into two sets:
  
  \subsubsection{Greedy strategies for P2}

  In a turn-based co-safe game, for a state in player 2's winning
  region $\win_2^2$, there exists a \emph{memoryless, deterministic,
    sure-winning strategy}
  $\pi_i^\surewin: \win_2^2\rightarrow 2^{A_i}$ such that by following
  the strategy, player $2$ can achieve his/her objective with a
  \emph{minimal number of steps under the worst case strategy} used by
  her opponent.
  
  We call this strategy \emph{greedy} for short. This strategy of P2
  is extracted from the solution of $\game_2$ as follows: Using  Alg.~\ref{alg:zielonka}, we obtain a sequence of sets
  $Z_k \subseteq S\times Q_2$, for $k=1,\ldots, N$, and define the \emph{level} sets: $\level_0 = S\times F_2$, and 
  \begin{equation} 
  \label{eq:levelset} \level_k = Z_k\setminus Z_{k-1}, \text{for } k=1,\ldots, N. \end{equation}
   Intuitively, for any state in
  $\level_k$, P2 has a strategy to ensure to visit a state in
  $\level_0$ in a maximal $k$ steps under the worst case strategy of P1.

  For each $k=1,\ldots, N$, for each $z\in \level_k$, let
  \[ \pi_2^\surewin(z) = \{a\in A_2 \mid \Delta_2(z,a)\in Z_{k-1}\}.
  \]
  In words, by following the greedy strategy, P2 ensures that the
  maximal number of steps to reach $S\times F_2$ is strictly
  decreasing.

\subsubsection{A non-greedy opponent with unbounded memory}When P2 is
allowed to use finite-memory, stochastic strategies, there can be
\emph{infinitely many winning strategies} for P2 given her co-safe
\ac{ltl} objective. 

To see why it is the case, we start with classifying
P2's actions into two sets: Given a history
$\rho \in (S\times Q_2)^\ast$ ending in state $(s,q_2)$,
\begin{itemize}
\item an action $a$ is \emph{perceived to be safe} by P2 if
  $\Delta_2((s,q_2), a) \in \win_2^2$. That is, taking action $a$ will ensure P2 to stay within her perceptual winning region.
\item an action $a$ is \emph{perceived to be sure winning} for P2 if
  $a\in \pi_2^\surewin((s,q_2))$--that is, an action chosen by the
  greedy winning strategy.
  \end{itemize}

  \begin{lemma}
    For a finite-memory, randomized strategy of P2
    $\pi_2:\win_2^2 \rightarrow \dist{A_2}$, P2 can win by
    following $\pi_2$ in her perceptual game if the
    strategy $\pi_2$ satisfies: For every state $(s,q_2)\in \win_2^2$,
    let
    $X = \{\rho \in Path(\game_2) \mid \last(\rho)= (s,q_2)\}$ be a set of paths ending in $(s,q_2)$, it holds that
    \begin{enumerate}
    \item for any $a \in \cup_{\rho \in X}\supp( \pi_2(\rho) )$,
      $ \Delta_2((s,q_2), a) \in \win_2^2$. That is, only safe actions are taken.
   \item
      $\cup _{\rho \in X}\supp(\pi_2(\rho)) \cap
      \pi_2^{\surewin}((s,q_2)) \ne \emptyset$. That is, eventually some sure-winning action will be taken with a nonzero probability.
        \end{enumerate}
Further, this strategy is  \emph{almost-sure winning} for P2.
      \end{lemma}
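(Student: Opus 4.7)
The plan is to fix an arbitrary P1 strategy (without loss of generality finite-memory), form the induced finite Markov chain $\mathcal{M}$ on the product of game states with the joint memory states of $\pi_1$ and $\pi_2$, and establish two properties: (a) every trajectory of $\mathcal{M}$ stays inside $\win_2^2$ (a safety claim), and (b) $\acc_2 = S \times F_2$ is visited with probability one (a progress claim). Together they certify almost-sure winning for P2.

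\textbf{Step 1 (invariance of $\win_2^2$).} First I would show by induction on the path length that every visited state lies in $\win_2^2$. At a P2-state $(s,q_2)\in \win_2^2$, condition~(1) directly forbids any action whose successor leaves $\win_2^2$. At a P1-state in $\win_2^2$, determinacy (Theorem~\ref{thm:determined}) and the definition of the winning region force every P1 action to land back in $\win_2^2$: if some action escaped to $\win_1^2$, P1 could adopt it as a finite-memory counter-strategy, contradicting $(s,q_2)\in \win_2^2$.

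\textbf{Step 2 (reduction to BSCCs).} Because $\pi_2$ is finite-memory and we may take P1's strategy to be finite-memory as well, the closed system $\mathcal{M}$ is a finite Markov chain on a subset of $\win_2^2\times M_1\times M_2$. By standard finite-Markov-chain theory, with probability one the play is absorbed into some bottom strongly connected component (BSCC). It therefore suffices to show that every BSCC $C$ of $\mathcal{M}$ contains a state whose game projection lies in $\acc_2$, because once $\calA_2$ enters $F_2$ it stays there.

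\textbf{Step 3 (BSCC analysis via level sets, the main step).} I would argue by contradiction. Assume $C$ projects entirely outside $\acc_2$, and let $k \ge 1$ be the minimum level (in the sense of~\eqref{eq:levelset}) achieved by game projections of states in $C$; pick $v\in C$ whose projection $(s,q_2)$ has level $k$. If $(s,q_2)\in V_2$, the attractor construction used in Alg.~\ref{alg:zielonka} guarantees that every $a\in \pi_2^{\surewin}((s,q_2))$ sends the game into $Z_{k-1}$, i.e., to strictly lower level. Condition~(2), applied to the family of histories that $\mathcal{M}$ actually generates to reach $(s,q_2)$ within $C$, provides a sure-winning action taken with positive probability from some configuration in $C$; its successor is in $C$ (BSCCs are closed under positive-probability transitions) and has level $<k$, contradicting minimality of $k$. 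If instead $(s,q_2)\in V_1$, the definition of $\level_k$ for P1-states in $Z_k$ forces \emph{every} P1-action from $(s,q_2)$ into $Z_{k-1}$, so P1's chosen successor is in $C$ at level $<k$, again contradicting minimality.

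\textbf{Step 4 (conclusion) and main obstacle.} Combining the steps, every BSCC of $\mathcal{M}$ contains a state in $\acc_2$, absorption is almost sure, and hence $\pi_2$ is almost-sure winning against every P1 strategy. The delicate part of the argument is Step~3: condition~(2) is phrased over the set $X$ of \emph{all} path-histories ending at $(s,q_2)$, while the recurrent behavior inside $C$ realises only those histories compatible with the memory carried into $C$. I expect the main obstacle to be showing rigorously that the finite memory of $\pi_2$ can be unfolded into $\mathcal{M}$ in such a way that recurrence of $(s,q_2)$ in $C$ forces at least one memory configuration at which condition~(2) supplies a sure-winning action with positive probability; once that is established, the rest of the argument is a standard attractor-plus-BSCC computation.
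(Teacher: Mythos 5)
Your overall architecture is sound and genuinely different from the paper's: where the paper argues directly on the stochastic process of visited states (from every level-$k$ configuration the level drops within finitely many steps with probability at least some $p>0$, followed by a geometric-series argument forcing the level to reach zero almost surely), you reduce to a finite Markov chain and argue that every BSCC must contain a minimal-level state that can still make progress. Steps 1 and 2 are fine: the claim that every P1-action from a level-$k$ P1-state lands in $Z_{k-1}$ follows directly from the $\pre_1^\forall$ clause of the attractor computation, so the determinacy detour is unnecessary; and restricting P1 to finite memory is legitimate because, once the finite-memory $\pi_2$ is fixed, P1 faces a finite MDP in which memoryless deterministic strategies minimize the reachability probability (you should say this, but it is standard).

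The gap is exactly where you located it, and it cannot be closed from condition (2) as stated. Condition (2) only asserts that the union of supports over \emph{all} histories ending at $(s,q_2)$ meets $\pi_2^{\surewin}((s,q_2))$; a single history $\rho^*$ witnesses this, and nothing forces the memory configuration realized by $\rho^*$ to occur, let alone recur, inside the BSCC $C$. Concretely, in the game of Fig.~\ref{fig:p2game_ex}, let $\pi_2$ at $(1,0)$ play $b_1$ and $b_3$ each with probability $1/2$ on the first visit and play $b_3$ with probability one on every later visit (a two-state memory), with $\pi_2((2,0))=b_1$. Both conditions of the lemma hold (the supports over all histories at $(1,0)$ are $\{b_1,b_3\}$, all safe, and $b_1\in\pi_2^{\surewin}((1,0))$), yet against P1 playing $a_1$ forever P2 reaches $S\times F_2$ with probability only $1/2$. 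So the lemma is false as literally stated and your Step 3 cannot be completed; condition (2) must be strengthened to hold for every history (equivalently, for every reachable memory state of $\pi_2$ at $(s,q_2)$, or ``infinitely often along every consistent play''). For what it is worth, the paper's own proof makes the same unacknowledged jump: it asserts a uniform positive probability of reaching $\level_{k-1}$ from every level-$k$ P2-configuration, which likewise does not follow from condition (2) as written. Under the strengthened condition, both your BSCC argument and the paper's argument go through, and yours is the cleaner of the two.
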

      \begin{proof}
        The first condition must be satisfied for any winning strategy
        for P2 to stay within the her winning region. The second
        condition is to ensure that a state in $S\times F_2$ is
        eventually visited. To see this, let's use induction: By
        following strategy $\pi_2$, for an arbitrary P1's strategy
        $\pi_1$, let $\{X_0, X_1,X_2,\ldots\}$ be the stochastic
        process of states visited at steps $0,1,2, \ldots$ given
        $(\pi_1,\pi_2)$ in the game $\game_2$. By definition of
        $\pi_2$ and the properties of the game's solution, we have
 the following conditions satisfied:        \begin{inparaenum}[1)]
        \item If it is a state of P2, then the probability of reaching   $\level_{k-1}$ in a finite  number of steps from a state in $\level_k$ is strictly positive:
        $P(X_{t+m}\in \level_{k}\mid X_{t-1}\in \level_{k-1}\land
        X_{t-1}\in S_2\times Q_2)>0$ for some integer $m\ge 0$;
        \item If it is P1's state at level $k$, then for any action of P1, the next state must be in $\level_{\ell}$ for $\ell \le k-1$:
        $P(X_{t}\in \level_{\ell}\mid X_{t-1}\in \level_{k-1}\land
        X_{t-1}\in S_1\times Q_2)=1$;
        \item And the game state is always in $\win_2^2$, 
        $P(X_{t}\in \win_2^2 \mid X_{t-1}\in \win_2^2)=1$.
        \end{inparaenum}
        Let $E_n$
        be the event that ``The level of state is reduced by one in
        $n$ steps''. Given
        $\sum_{n=1}^\infty P(E_n) \ge \sum_{n=1}^\infty p(1-p)^{n-1}=1$
        where $p $ is the minimal probability of $P(E_n)$, the
        probability of reducing the level by one along the path in
        infinitely number of steps is one. In addition, with
        probability one, the level of $X_t$ is finite for all
        $t\ge 0$. Thus, eventually, the level of $X_t$ reduces to zero
        as $t$ approaches infinity.
      \end{proof}

      In words, the almost-sure winning strategy for P2 only allows a
      (perceived) safe action to be taken with a nonzero probability. It
      also enforces that a ``greedy'' action used by the sure-winning
      strategy must be selected with a nonzero probability
      \emph{eventually}.   Since there can be infinitely many such almost-sure winning
         strategies, we take an approximation of the \emph{set} of
         almost-sure winning strategy as a \emph{memoryless set-based
           strategy} as follows.
      \begin{equation}
         \label{eq:approximate-asw}
        \hat{\pi}^\asurewin_2((s,q_2)) = \{a\mid \Delta_2((s,q_2),a )\in \win_2^2 \}.
      \end{equation}
      That is, we assume that at any state $(s,q_2)$, P2 can select
      any action from a set of safe actions to staying within her
      perceived winning region. Note that the progressing action
      $\pi_2^\surewin((s,q_2))$ is also safe, that is,    $\pi_2^\surewin((s,q_2)) \in   \hat{\pi}^\asurewin_2((s,q_2)) $.

      \begin{example}
        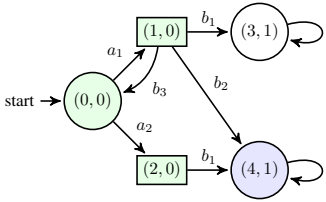
\begin{figure}[ht]
        \centering
              \begin{tikzpicture}[->,>=stealth',shorten >=1pt,auto,node distance=2cm,
                            semithick, scale=0.65, transform shape,  square/.style={rectangle}]
        \tikzstyle{every state}=[fill=white]
        \node[initial,state, fill=green!10]   (0)                      {$(0,0)$};
        \node[square,draw, fill=green!10]           (1) [above right  of=0]   {$(1,0)$};
                \node[square,draw, fill=green!10]           (2) [below right  of=0]   {$(2,0)$};
                \node[state] (3) [right of =1] {$(3,1)$};
                \node[state] (4) [right of =2,fill=blue!10] {$(4,1)$};
        \path[->]   (0) edge              node    {$a_1$}           (1)
        (0) edge   node         {$a_2$}  (2)
                            (1) edge [bend left] node {$b_3$} (0)
                    (1) edge node {$b_1$} (3)
                    (1) edge node {$b_2$} (4)
                    (2) edge node {$b_1$} (4)
                    (3) edge[loop right] node {} (3)
                    (4) edge[loop right] node {} (4);
    \end{tikzpicture}
          \caption{The perceptual game of
            P2.} \label{fig:p2game_ex}
        \end{figure}
        We construct P2's perceptual game graph in
        Fig.~\ref{fig:p2game_ex} using \ac{dfa} $\calA_2$ in
        Fig.~\ref{fig:p2task} and the game arena in
        Fig.~\ref{fig:arena_ex}. Using Alg.~\ref{alg:zielonka}, we
        obtain $Z_0 =\{(3,1), (4,1)\}$,
        $Z_1=\{ (1,0), (2,0)\} \cup Z_0$, and
        $Z_2 =\{(0,0)\} \cup Z_1$. Thus, $\level_0 = Z_0$,
        $\level_1 = \{ (1,0), (2,0)\}$, and $\level_2=\{0,0\}$. The
        greedy winning strategy $\pi_2^\surewin ((1,0))=\{b_1,b_2\}$
        and $\pi_2^\surewin ((2,0))=\{b_1\}$.

        One almost-sure winning strategy for P2 is that
        $\pi_2((1,0),b_i)=\epsilon_i$, for any
        $b_i\in \{b_1,b_2,b_3\}$ and $\pi_2((2,0),b_1)=1$. The
        parameters $\epsilon_i, i=1,2,3$ can be picked arbitrary under the
        constraints $ \epsilon_2>0$ or $\epsilon_3>0$, and
        $\sum_{i=1}^3 \epsilon_i=1$. This strategy of P2 ensures that even if 
        the loop $(0,0)\xrightarrow{a_1} (1,0) \xrightarrow{b_3}(0,0)$
        occurs, it can only occur finitely many times. Eventually,
        $b_1$ or $b_2$ will be selected by P2 to reach $S\times
        F_2$. The approximation of the almost-sure winning strategies
        is $\hat \pi_2^\asurewin((1,0))=\{b_1,b_2,b_3\}$ and
        $\hat \pi_2^\asurewin((2,0))=\{b_1\}$.
      \end{example}

      \subsection{Synthesizing P1's deceptive winning strategy}
  \label{subsec:deceptive_win}
Next, we use the hypergame transition system, a P2's strategy, to
compute a subgame for P1.

\begin{definition}[$\pi_i$-induced subgame graph]
  Given the graph of a game
  $\game = \langle V_1\cup V_2, A_1\cup A_2, \Delta, v_0 \rangle$
  and a strategy of player $i$, $\pi_i: V_i\rightarrow 2^{A_i}$, a $\pi_i$-induced
  subgame graph, denoted $\game_{\slash \pi_i} $, is the game graph
  $ \langle V_1\times V_2, A_1\cup A_2, \Delta_{\slash \pi_i}, v_0 \rangle$ where
  \begin{align*}
    \Delta_{\slash \pi_i}(v,a)& =\Delta(v,a); \quad \forall a\in A_k, k\ne i; \\
    \Delta_{\slash \pi_i}(v,a) &=\Delta(v,a); \quad \forall a\in \pi_i(v); \\
    \Delta_{\slash \pi_i}(v,a) & \uparrow; \quad \forall a\in  A_i\setminus \pi_i(v).
  \end{align*} where $\uparrow$ means that the function is undefined for the given input.
\end{definition}
The subgame graph restricts player $i$'s actions to these
allowed by strategy $\pi_i$. It does not restrict player $k$'s
actions. Given a subgame graph $\game_{\slash \pi_i}$ and
another player $k$'s strategy $\pi_k: V_k\rightarrow 2^{A_k}$, we can
compute another subgame graph induced by $\pi_k$ from
$\game_{\slash \pi_i}$ and denote the subgame as
$\game_{\slash \pi_i, \pi_k}$.

Now, assuming that P2 follows the perceived winning strategy
$\pi_2: S\times Q_2\rightarrow 2^{A_2}$, we can construct an induced
subgame graph from the hypergame transition system $\hyperts$ using
P2's strategy defined by: $\pi'_2(v) \coloneqq \pi_2(s,q_2)$ for each
$v=(s,q_1,q_2)\in V_2$. Slightly abusing the notation, we still use $\pi_2$ to refer to P2's strategy defined over domain $V_2$.

By replacing $\pi_2$ to be either 1) the  greedy policy
$\pi_2^\surewin$ or 2) the approximation of almost-sure winning
strategies $\hat \pi_2^\asurewin$, we can solve P1's winning strategy with
respect to different objectives, $\commspec$ or
$\commspec\land \hidspec$, leveraging the information about P2's misperception.


We present a two-step procedure to solve P1's deceptive sure-winning strategy.
\begin{enumerate}
\item Step 1: Solve the $\pi_2$-induced subgame for P1 with respect to
  the safety objective:
  \[ \hyperts_{\slash \pi_2} = \langle V,
    A_1\cup A_2, \Delta_{\slash \pi_2}, v_0, ( \acc_{1,\safe},\safe)\]
  using Alg.~ \ref{alg:safe} with input
  $X_1 = V_1$, $X_2 =V_2$ and
  $T = \Delta_{\slash \pi_2}$, and let $B = \acc_{1,\safe}$. The
  outcome is a tuple $(\win_{1,\safe}, \pi_{1,\safe})$--that is, a set
  of states from which P1 ensures that the safety objective $\neg \varphi_2$ can be
  satisfied, by following the winning, set-based strategy
  $\pi_{1,\safe}:\win_{1,\safe}\rightarrow 2^{A_1}$.
\item Step 2: Compute the $\pi_{1,\safe}$-induced sub-game from
  $\hyperts_{\slash \pi_2}$, denoted as
  \begin{multline*}
    \hyperts_{\slash (\pi_{1,\safe},\pi_2)} \\= \langle V\cap \win_{1,\safe},
    A_1\cup A_2, \Delta_{\slash (\pi_{1,\safe}, \pi_2)},\\
    v_0, ( \acc_{1,\cosafe} \cap \win_{1,\safe},\cosafe)
    \rangle.\end{multline*} Then, we  solve the subgame for P1's co-safe
  objective using Alg.~\ref{alg:zielonka} with input
  $X_1= V_1 \cap \win_{1,\safe},$,
  $X_2 = V_2\cap \win_{1,\safe},$ and
  $T = \Delta_{\slash(\pi_{1,\safe}, \pi_2)}$, and let
  $F = \acc_{1,\cosafe}  \cap \win_{1,\safe}$. The outcome is a tuple
  $(\win_{1,\cosafe}, \pi_{1,\cosafe})$--that is, a set of states from
  which P1 ensures that both the safety and co-safe, hidden objectives
  can be satisfied, by following the winning, set-based strategy
  $\pi_{1,\cosafe}:\win_{1,\cosafe}\rightarrow 2^{A_1}$.  Note that
  safety objective is satisfied because P1 only can select actions
  allowed by his safe strategy $\pi_{1,\safe}$.
\end{enumerate}

The next Lemma shows that for any safe or co-safe objective of P1, if a
strategy is winning for P1 against the approximation of almost-sure
winning strategies of P2 (see \eqref{eq:approximate-asw} for the definition), then the same strategy is winning for P1
against any almost-sure winning strategy of P2.

  \begin{lemma}
    Consider a game
    $\game = (V_1\cup V_2, A_1\cup A_2, \Delta, v_0)$
    and two strategies of Player 2,
    $\pi_2:V^\ast \rightarrow \dist{A_2}$ is finite-memory and
    randomized; and $\hat{\pi}_2: V\rightarrow 2^{A_2}$ is a
    stationary, set-based, and deterministic. If these two strategies
    satisfies that for any $\rho= v_0v_1\ldots v_k \in V^\ast $,
    $\pi_2(\rho,a)>0$ if and only if $a\in \hat{\pi}_2 (v_k)$. That
    is, $\supp(\pi_2(\rho)) = \hat{\pi}_2 (v_k)$.  Then, given an initial state $v_0\in V$ and any subset $F\subseteq V$, if P1 has a
    winning strategy $\pi_1$ in the $\hat{\pi}_2$-induced game for
    objective $(F,\safe)$ (or $(F, \cosafe)$), then P1 wins even if P2 follows strategy $\pi_2$.
    \end{lemma}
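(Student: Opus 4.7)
The plan is to argue that every sample path of the stochastic process induced by $(\pi_1, \pi_2)$ is, in fact, a legitimate play in the $\hat{\pi}_2$-induced game, and then transfer the qualitative guarantees of $\pi_1$ from that game to the probabilistic setting.

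First, I would record the key observation that makes the transfer possible: since $\supp(\pi_2(\rho)) = \hat{\pi}_2(v_k)$ for every history $\rho$ ending at $v_k \in V_2$, any action $a$ that $\pi_2$ assigns positive probability at $\rho$ satisfies $a \in \hat{\pi}_2(v_k)$, and conversely every action allowed by $\hat{\pi}_2$ is selected with positive probability by $\pi_2$. Consequently, the set of finite prefixes of paths that appear with positive probability in the Markov chain induced by $(\pi_1, \pi_2)$ coincides exactly with the set of plays consistent with $(\pi_1, \hat{\pi}_2)$ in the (nondeterministic) $\hat{\pi}_2$-induced game.

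For the safety objective $(F, \safe)$, the argument is then immediate. Since $\pi_1$ is winning in $\game_{\slash \hat{\pi}_2}$, every play consistent with $(\pi_1, \hat{\pi}_2)$ stays in $F$ forever. By the observation above, every sample path of the $(\pi_1, \pi_2)$-induced chain is such a play, so the event ``the path stays in $F$'' has probability one.

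For the co-safe objective $(F, \cosafe)$, I would invoke the level-set structure from Alg.~\ref{alg:zielonka}. Sure-winning for a co-safe objective on a finite game graph yields a uniform bound $N = \max_k \{k : \level_k \cap \win_1 \ne \emptyset\}$ such that every play in $\game_{\slash \hat{\pi}_2}$ consistent with $\pi_1$ reaches $F$ within at most $N$ steps (because the level strictly decreases at each P1-move and cannot increase at any allowed P2-move). Combined with the observation above, every realized path of $(\pi_1, \pi_2)$ reaches $F$ within $N$ steps, so the reachability event has probability one. The mildly subtle point, and the one place where care is needed, is that $\pi_1$ may itself be set-based (nondeterministic for P1); however, by the semantics of sure-winning strategies, any pointwise selection from $\pi_1$ is also sure-winning, so the bounded-time reachability claim applies uniformly to every admissible selection, completing the argument.
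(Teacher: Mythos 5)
Your proposal is correct and follows essentially the same route as the paper: safety transfers immediately because every positive-probability history under $(\pi_1,\pi_2)$ is a play of the $\hat{\pi}_2$-induced game, and co-safety is handled via the level sets produced by the attractor computation. One point to tighten: your parenthetical justification for the uniform bound $N$ says the level \emph{cannot increase} at an allowed P2-move, but that alone does not bound the time to reach $F$, since P2 could in principle linger at a fixed level forever. What the attractor construction actually gives --- and what the paper's proof uses --- is that a P2 state enters $\level_i$ only through $\pre^\forall_2(Z_{i-1})$, so \emph{every} action in $\hat{\pi}_2(v)$, and hence every action in $\supp(\pi_2(\rho))$, strictly decreases the level. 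With that strengthening your bounded-reachability claim, and hence the (in fact sure, not merely almost-sure) win, goes through. The remaining difference is purely presentational: the paper argues the co-safe case probabilistically, computing the one-step probability of decreasing the level as $\sum_{a\in \supp(\pi_2(\rho))}\pi_2(\rho,a)=1$, whereas you argue pathwise that the set of realizable paths is contained in the set of plays of the induced game; the two arguments are equivalent here, and your packaging is arguably the cleaner statement of why the transfer works. Your observation that any pointwise selection from a set-based winning strategy remains winning is a legitimate care point that the paper leaves implicit.
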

    \begin{proof}
      We consider two cases:

      Case 1: P1's objective is given by $(F, \safe)$--that is, P1 is
      to ensure the game states to stay in the set $F$. By definition, the winning strategy of P1 ensures that the game stays within a set $\win_1\subseteq F$ of safe states, no matter which action P2 selects using $\hat \pi_2$. For any
      history $\rho=v_0v_1\ldots v_k \in V^\ast$, for any action $a $
      that $\pi_2(\rho)$ will select with a nonzero probability, then
      the resulting state is still within $\win_1\subseteq F$, because
      $a \in \hat{\pi}_2(v_k)$. Thus, P1 ensures to satisfy the safety
      objective with strategy $\pi_1$ even against P2's strategy
      $\pi_2$.

      Case 2: P1's objective is given by $(F, \cosafe)$--that is, P1
      is to reach set $F$. In this case, the winning region of P1 is
      partitioned into level sets (see
      Alg.~\ref{alg:zielonka} and the definition of level sets \eqref{eq:levelset}). Given a history $\rho\in V^\ast$, if
      $v_k $ is P1's turn and $v_k\in \level_i$, then by construction,
      there exists an action of P1 to reach $\level_{i-1}$. Otherwise,
      $v_k$ is P2's turn and $v_k\in \level_i$, by construction, for
      any action of P2  in $\hat \pi_2(v_k)$, the next state
      is in $\level_{i-1}$. While P2 follows $\pi_2$, the probability
      of reaching $\level_{i-1}$ in one step is
      \begin{align*}
        &\sum_{a\in A_2} \pi_2(\rho, a) \mathbf{1}(\Delta(v_k,a)\in
          \level_{i-1})\\
        = &\sum_{a\in {\supp(\pi_2(\rho))} } \pi_2(\rho, a)
            \mathbf{1}(\Delta(v_k,a)\in \level_{i-1})\\
        = & \sum_{a\in {\supp(\pi_2(\rho))} } \pi_2(\rho, a)=1, \end{align*}
        where the first equality is because only actions in the support of $\pi_2(\rho)$ can be chose with nonzero probabilities; and the second equality is 
        because $\supp(\pi_2(\rho)) \subseteq \hat\pi_2(s_k) $
        and         $\mathbf{1}(\Delta(v_k,a)\in \level_{i-1}) =1$ for any
        $a\in \hat \pi_2(v_k)$. Thus, the level will be strictly decreasing every time an action is taken by P1 or P2 and the level of any state in the winning region $\win_1$ is finite. When the level reaches zero, P1 visits $F$.
      \end{proof}
    


\begin{remark}
  In our analysis, the defender is playing against all possible 
  strategies that can be used by the attacker. The deceptive winning
  strategy computed for the defender can be conservative for any fixed
  attack strategy used by the attacker. For example, if the attack
  takes a set of almost-sure winning actions uniformly at random (all
  actions are equally winning as she perceives), then it may leave the
  opportunity for the defender to ensure, with a positive probability,
  the attacker will be lured into a honeypot, and at the same time,
  the defender can ensure, with probability one, the safety objective
  of the system is satisfied. This chance-winning strategy requires
  further analysis of positive winning in games and could be explored
  in the future.
\end{remark}

Next, we use the toy example to demonstrate the computation of player 1's almost-sure winning strategy with deception.

\begin{example}
  Let's revise the simple example in Ex.~\ref{ex:simple_game} to
  illustrate the computation of deceptive winning regions for P1. We added one transition
  $2\xrightarrow{b_2} 1$. The resulting hypergame transition system is shown in
  Fig.~\ref{fig:revised_ex}. Note that the perceptual game of P2 is
  changed in similar way by adding a transition from
  $(1,0)\xrightarrow{b_2}(1,0)$. We omit the
  figure here.

  \begin{figure}[h]
    \subfloat[The example of $\hyperts$ after revising a transition.]{
        \begin{tikzpicture}[->,>=stealth',shorten >=1pt,auto,node distance=2cm,
                            semithick, scale=0.65, transform shape,  square/.style={regular polygon, regular polygon sides=4}]
        \tikzstyle{every state}=[fill=white]
        \node[initial,state, fill=green!10]   (0)                      {$v_0$};
        \node[square,draw, fill=green!10]           (1) [above right  of=0]   {$v_1$};
                \node[square,draw, fill=green!10]           (2) [below right  of=0]   {$v_2$};
                \node[state] (3) [right of =1] {$v_3$};
                \node[state] (4) [right of =2,fill=blue!10] {$v_4$};
        \path[->]   (0) edge              node    {$a_1$}           (1)
        (0) edge   node         {$a_2$}  (2)
                            (1) edge [bend left] node {$b_3$} (0)
                    (1) edge node {$b_1$} (3)
                    (1) edge node  {$b_2$} (4)
                    (2) edge node {$b_1$} (4)
                    (2) edge [draw,dashed, bend right] node [right] {$b_2$} (1)
                     (3) edge[loop right] node {} (3)
                    (4) edge[loop right] node {} (4);
    \end{tikzpicture}
    \label{fig:revised_ex}}
   \subfloat[The $\pi_2^\surewin$-induced $\hyperts$.]{
      \begin{tikzpicture}[->,>=stealth',shorten >=1pt,auto,node distance=2cm,
                            semithick, scale=0.65, transform shape,  square/.style={regular polygon,regular polygon sides=4}]
        \tikzstyle{every state}=[fill=white]
        \node[initial,state, fill=green!10]   (0)                      {$v_0$};
        \node[square,draw, fill=green!10]           (1) [above right  of=0]   {$v_1$};
                \node[square,draw, fill=green!10]           (2) [below right  of=0]   {$v_2$};
                \node[state] (3) [right of =1] {$v_3$};
                \node[state] (4) [right of =2,fill=blue!10] {$v_4$};
        \path[->]   (0) edge              node    {$a_1$}           (1)
        (0) edge   node         {$a_2$}  (2)
        (1) edge node {$b_1$} (3)
                            (1) edge node {$b_2$} (4)
                    (2) edge node {$b_1$} (4)
                    (3) edge[loop right] node {} (3)
                    (4) edge[loop right] node {} (4);
    \end{tikzpicture}
    \label{fig:revised_ex_induced}}
    \end{figure}

    With this change, it can be verified that the winning region of P2 in her peceptual game is not changed and the sure-winning strategy is still $\pi_2^\surewin(1,0)=\{b_1,b_2\}$, $\pi_2^\surewin(2,0)=\{b_1\}$. The approximation of almost-sure winning strategy is now $\hat \pi^\asurewin_2(1,0)=\{b_1,b_2,b_3\}$ and $\hat \pi_2^\asurewin(2,0)=\{b_1,b_2\}$. 

    Case 1: P2 is greedy. The $\pi_2^\surewin$-induced subgame is
    shown in Fig.~\ref{fig:revised_ex_induced}. By applying
    Algorithm~\ref{alg:zielonka}, we obtain P1's deceptive winning
    region and winning strategy $\win_{1,\safe}= \{v_0,v_2,v_4\}$ and
    $\pi_1(v_0)=a_2$. This strategy also ensures that P1 can satisfy
    the hidden, co-safe objective and lead P2 to visit the decoy.
    
    Case 2: P2 is randomized. The $\hat \pi_2^\asurewin$-induced subgame is exactly the one in Fig.~\ref{fig:revised_ex}. By applying
    Algorithm~\ref{alg:zielonka}, we obtain P1's deceptive winning
    region and winning strategy $\win_{1,\safe}= \emptyset$. Thus, P1 cannot achieve either his safety or hidden co-safe objectives when P2 is not greedy and uses a finite-memory, randomized strategy.
         \end{example}

         \subsection{Complexity analysis}
         \label{subsec:complexity}
         Algorithm~\ref{alg:zielonka} and Algorithm~\ref{alg:safe} run in $\mathcal{O}(\abs{X}+\abs{T})$ where $\abs{X}$ is the number of states and  $\abs{T}$ is the number of transitions in the game. Thus, the time complexity of solving P1's deceptive winning region for  safety and the preferred objectives is $\mathcal{O}(\abs{V}+\abs{\Delta})$. To compute a policy $\pi_i$-induced subgame, the time complexity is $\Theta(\abs{\mathsf{Dom}(\pi_i)})$ where $\mathsf{Dom}(\pi_i)$ is the domain of policy $\pi_i$.  
         
         \section{Experimental result}
\label{sec:experiment}
We demonstrate the effectiveness of the proposed synthesis methods in a synthetic network.    The topology of the network is manually generated. The set of vulnerabilities is defined  with the pre- and post-conditions of concrete vulnerability instances  in \cite{Sheyner2004}. The generation of the attacker graph is based on multiple prerequisite attack graph \cite{ingols2006practical} with some modification. We assume that the attacker cannot carry out attacks from multiple hosts at the same time. We perform two experiments on two networks (small and large) and different sets of attacker and defender objectives.

\subsection{Experiment 1: A small network with simple attacker/defender objectives}

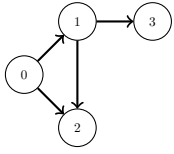
\begin{figure}[ht]
\centering

\begin{tikzpicture}[align=center,  node distance=2cm, scale=0.5, transform shape]
    \node[main node] (0) {$0$};
    \node[main node] (1) [above right of =0]  {$1$};
    \node[main node] (2) [below right of =0] {$2$};
    \node[main node] (3) [right of = 1] {$3$};
    
    \path[->,draw,thick]
    
    (0) edge node {} (1)
    (0) edge node {} (2)
    (1) edge node {} (2)
    (1) edge node {} (3);
\end{tikzpicture}
\caption{\label{fig:host_connectivity_graph}The connectivity graph of hosts in a small network.}
\end{figure}
Formally, the network consists of a list $\hosts$ of hosts, with a connectivity graph shown in Fig.~\ref{fig:host_connectivity_graph}. Each host runs a subset $\servs = \{0,1,2\}$ of services. A user in the network can have one of the three login credentials $\mbox{credentials}= \{0,1,2\}$ standing for ``no access'' (0), ``user'' (1), and ``root'' (2). There are a set of vulnerabilities in the network, each of which is defined by a pre-condition and a post-condition. The pre-condition is a Boolean formula that specifies the set of logical properties to be satisfied for an attacker to exploit the vulnerability instance. The post-condition is a Boolean formula that specifies the logical properties that can be achieved after the attacker has exploited that vulnerability. For example, a vulnerability named ``IIS\_overflow'' requires the attacker to have a credential of user or a root, and the host running IIS webserver. After the attacker exploits this vulnerability, the service will be stopped and the attacker gains the credential as a root user. The set of vulnerabilities are given in Table~\ref{tb:vuldefine} and  are generated based on the vulnerabilities described in \cite{Sheyner2002}. Note that the game-theoretic analysis is not limited to these particular instances of vulnerabilities in \cite{Sheyner2002} but rather used these as a proof of concept.

The defender can temporally suspend noncritical services from servers. To incorporate this defense mechanism, we assign each host a set of noncritical services that can be suspended from the host.  In Table~\ref{tb:networkstatus_simple}, we list the set of services running on each host, and a set of noncritical services that can be suspended by the defender. Other defenses can be considered. For example, if the network topology can be reconfigured online, then the state in the game arena should keep track of the current topology configuration of the network. In our experiment, we consider simple defense actions. The method extends to more concrete defense mechanisms. 

\begin{table}[ht]
\centering
\caption{The pre- and post-conditions of vulnerabilities. }
\label{tb:vuldefine}
\begin{tabular}{c|l }
\toprule
     vul id. &  pre- and post- \\
     \toprule
     0 & Pre : $c\ge 1$, service $0$ running on target host, \\
     &Post : $c=2$, stop service $0$ on the target, reach target host.\\
     \hline
     1 & Pre: $c\ge 1$, service $1$ running on the target host, \\
     & Post : reach target host. \\
     \hline
     2 & Pre: $c\ge 1$, service $2$ running on the target host \\
     & Post : $c=2$,  reach target host.\\
     \bottomrule
\end{tabular}
\end{table}

\begin{table}[ht]
    \centering    \caption{The network status and the defender's options}
    
    \begin{tabular}{c|c|c}
    \toprule
         host id. & services & non-critical services  \\
         0&$\{1\} $& $\emptyset$\\
         1& $\{0,1\}$ &$\{1\}$\\
         2& $\{0,1,2\}$ & $\{1\}$\\
         3 & $\{0,1,2\}$ &$\{0,1,2\}$\\
         \bottomrule
    \end{tabular}
    \label{tb:networkstatus_simple}
\end{table}
Given the network topology, the services and vulnerabilities, a state of the game arena is a tuple
\[
  (h, c, \turn, \nwstate),
\]
where $h\in \hosts$ is the current host of the attacker, $c \in \creds$ is the current
credential of the attacker on that host $h$, $\turn$ is a turn
variable indicating whether the attacker ($\turn= 1$) takes an
action or the defender does ($\turn= 0$). The last component
$\nwstate$ specifies the network condition as a list of
$(h_i, \serv_i)$ pairs that gives a set $ \serv_i$ of
services running on host $h_i$, for each $h_i\in \hosts$. We
denote the set of possible network conditions in the network as $\nwstates$.

The attacker, at a given
attacker's state, can exploit any existing vulnerability on the
current host.  The defender, at a defender's state, can choose to suspend
a noncritical service on any host in the network. Given the defenses and attacks, a sampled path in the arena is illustrated as follows.

\begin{multline*}
    (0, 1, 1, \{0: \{1\}, \mathbf{ 1: \{0, 1\}}, 2: \{0,1, 2\}, 3: \{0, 1, 2\}\})
  \\
  \xrightarrow{\textcolor{red}{(1,0)}}(1, 2, 0, \{0: \{1\}, \mathbf{1: \{1\}}, 2: \{0,1, 2\}, 3: \{0, 1, 2\}\})\\
  \xrightarrow{\textcolor{blue}{(2,1)}}
  (1, 2, 1, \{0: \{1\}, 1: \{1\}, \mathbf{2: \{0, 2\}}, 3: \{0,1, 2\}\}),
\end{multline*}

This specifies that the attacker is in source host $h=0$ with user access $c= 1$ and has her turn $\turn= 1$ to take an action. She chooses to attack host $1$ with the vulnerability $0$ (action $(1,0)$). The vulnerability $0$'s pre-condition requires  that the attacker has a user/root access  on the source host and service $0$ is running on the target host. After taking the action, the attacker reaches the target host $h'=1$, stopped the service and change her access to root $c' = 2$ (see Table~\ref{tb:vuldefine}). Now it is the defender's turn $\turn' =0$. The service $0$ is stopped in host $1$, resulting the only change in the $\nwstate$ as shown in boldface. Then, the defender takes an action $(2, 1)$, indicating the action to
stop service $1$ on host $2$. This action results in a change in the
network status, as shown in boldface. The arena is generated to
consider all possible actions that can be taken by the attacker or
defender. If at a given state there is no action available, \ie, no
vulnerability to exploit or action to defense, the player selects a
``null'' action, switching the turn to the other player. The full arena is shown in Fig.~\ref{fig:arena_simple} in Appendix. At the beginning of
this game, the attacker is at host 0 with user access. Each host runs
all available services. The state colored in red  in Fig.~\ref{fig:arena_simple} is the initial state in the arena.

In the following experiments, we consider players' objectives in Example~\ref{ex:simple-specs}.
 The following labeling functions are defined: For the defender, we have
\[
L_1(h,c,\turn, \nwstate)=\left\{\begin{array}{ll} 
\{\goals\} &\text{ if } h=3 \text{ and }c\ge 1,\\
\{\decoys\} &\text{ if }h =2 \text{ and } c\ge 1,\\
\emptyset & \text{ otherwise.}\end{array}\right.
\]
In words, the decoy is set to host $2$. The attacker violates the safety objective if she reaches host $3$ with user or root credentials. For the  attacker, we have 
\[
L_2(h,c, \turn, \nwstate)=\left\{\begin{array}{ll} 
\{\goals\} & \text{ if } h=3, 2 \text{ and }c\ge 1,\\
\emptyset & \text{ otherwise.}\end{array}\right.
\]
That is, the attacker is to reach either host 2 or host 3, which are perceived to be critical hosts by the attacker. 

For this experiment, we compare the set of states from which P1 has a winning strategy to achieve his two objectives: safety $\commspec$ and preferred $\commspec\land \hidspec$. Let $\win_1$ (resp. $\win_1^\succ$) be the winning region of P1 given objective $\commspec$ (resp. $\commspec\land \hidspec$). As shown in Table~\ref{tb:compare-simple-specs}, the total number of states in the hypergame transition system is $259$, when P2 has no misperception and plays the zero-sum game against P1's two objectives, there does not exist a winning strategy for P1 given the initial state of the game. When P2 perceives the decoy to be a critical host (see the labeling function $L_2$) and uses the sure-winning greedy strategy, P1's winning regions for both objectives are larger and include the initial state, that is, P1 has a winning strategy to ensure the safety objective and force P2 to visit the decoy host. However, under this misperception, when P2 selects a randomized, finite memory strategy, there does not exist a winning strategy for P1 given the initial state of the game. This means a greedy adversary is easier to deceive than a non-greedy adversary. When P1 uses  the policy $\hat \pi_2^\asurewin$ to approximate P2's randomized strategies, his strategies are conservative for both objectives. 

In Appendix, Figure~\ref{fig:game_win} depicts the winning regions computed using the proposed method and P2's sure-winning greedy policy, with three partitions of the states in the hypergame transition system: \begin{inparaenum}[1)]
\item A set of states colored in blue or orange are the states where P2 perceives herself  to be winning given the co-safe objective and P1 can deceptively win against P2's co-safe objective given P2's greedy strategy.
\item A set of states colored in red are the states where P2 perceives herself to be winning and P1 cannot deceptively win against P2.
\item A set of states colored in yellow are the states where P2 perceives herself to be losing given the co-safe objective and P1 can deceptively win the safety objective. The union of blue and yellow states is $\win_1$.
\item A set of four states colored in orange and grey are those in P1's deceptive winning region $\win_1$ under any sure winning strategy of P2 but not in $\win_1$ under any randomized, finite memory, almost-sure winning strategy for P2.
\item The initial state is marked with an incoming arrow and colored in grey. 
\end{inparaenum}
It is noted that the set of yellow states is isolated from P2's perceived winning region. This is because we restricted P1's actions to only the ones allowed by his winning strategy. By following these strategies, P2 is confined to a set of states within $\win_1$.

\begin{table}\centering
  \caption{
                               \label{tb:compare-simple-specs} Comparison P1's winning regions with/without deception.}
    \resizebox{0.8\textwidth}{!}{\begin{minipage}{\textwidth}                       
                                          \begin{tabular}{c | cc | cc|cc}
\toprule
                                     $\abs{V}$ &      \multicolumn{2}{c}{No Misperception} & \multicolumn{2}{c}{Greedy} &
                                 \multicolumn{2}{c}{Randomized} 
                                 \\
                                              \hline
                                       &   $\win_1 $ & $\win_1^\succ$    &   $\win_1$ &  $\win_1^\succ$  &    
                                     $\win_1$ &  $\win_1^\succ  $   \\
                                               259 &   187, lose & 116, lose  & 191, win &  134, win & 187, lose   & 130, lose \\
                                          \bottomrule
                                          \end{tabular}
      \end{minipage}}
                                        \end{table}

\subsection{Experiment: A larger network with more complex attacker/defender objectives}

In this experiment, we consider a larger network with 7 hosts, with the connectivity graph shown in Fig.~\ref{fig:network_graph_large}.  P2's co-safe objective is given by an \ac{ltl} formula
$\varphi_2\coloneqq \Eventually A \land \Eventually B$--that is, visiting hosts labeled
$A$ and $B$ eventually. The \ac{dfa} $\calA_2$ that represents this
specification is given in Fig.~\ref{fig:dfa2_large}. The defender's
co-safe objective $\calA_1$ is the same as in
Fig.~\ref{fig:p1task_preferred}, which is to force the attacker to
visit a decoy. 
In Table~\ref{tb:networkstatus_large}, we list the set of services running on each host, and a set of noncritical services that can be suspended by the defender.
\begin{table}[ht]
  \centering \caption{The set of available services and non-critical services on each host.}
    \begin{tabular}{c|c|c}
    \toprule
         host id. & services & non-critical services  \\
         0&$\{1,2\} $& $\{2\}$\\
         1& $\{0,1,2\}$ &$\{1\}$\\
         2& $\{0,1,2\}$ & $\{1\}$\\
      3 & $\{0,1,2\}$ &$\{1\}$\\
      4 & $\{0,1,2\}$ &$\{0,1,2\}$\\
      5 & $\{1,2\}$ &$\{2\}$\\
      6 &  $\{0,1,2\}$ &$\{1\}$.\\
      \bottomrule
    \end{tabular}
    \label{tb:networkstatus_large}
\end{table}

\begin{figure}[ht]
\centering


\begin{tikzpicture}[align=center,  node distance=2cm, scale=0.5, transform shape]
    \node[main node] (0) {$0$};
    \node[main node] (1) [above right of =0]  {$1$};
    \node[main node] (3) [below right of =0] {$3$};
    \node[main node] (2) [below right of = 3] {$2$};
    \node[main node] (4) [right of = 3] {$4$};
    \node[main node] (5) [right of =4] {$5$};
    \node[main node] (6) [right of = 1] {$6$};
     \path[->,draw,thick]
     (0) edge node {} (1)
    (0) edge node {} (3)
    (1) edge node {} (6)
    (3) edge node {} (2)
    (3) edge node {} (4)
    (2) edge node {} (5)
    (6) edge node {} (4)
    (4) edge [bend left] node {} (6)
    (4) edge node {} (5)
    (5) edge node {} (6);
\end{tikzpicture}
\caption{\label{fig:network_graph_large}The connectivity graph of hosts in a network of seven hosts.}
\end{figure}
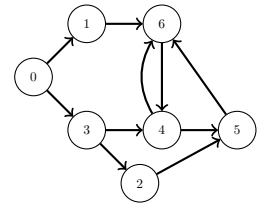 

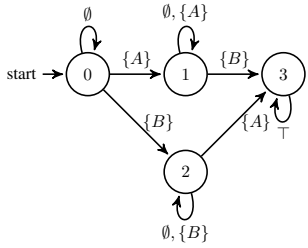
\begin{figure}[ht]
\centering
    \begin{tikzpicture}[->,>=stealth',shorten >=1pt,auto,node distance=2cm,
                            semithick, scale=0.65, transform shape]
        \tikzstyle{every state}=[fill=white]
        \node[initial,state]   (0)                      {$0$};
        \node[state]           (1) [ right of=0]   {$1$};
        \node[state]           (2) [below  of=1]   {$2$};
        \node [state] (3) [right of =1] {$3$};
        \path[->]
        (0) edge              node        {$\{A\}$}       (1)
        (0) edge [loop above] node        {$\emptyset$}       (0)
        (0) edge              node  [right]      {$\{B\}$}       (2)
        (1) edge [loop above] node        {$\emptyset,\{A\}$}       (1)
        (2) edge [loop below] node        {$\emptyset,\{B\}$}       (2)
        (1) edge  node        {$\{B\}$}       (3)
        (2) edge  node   [right]     {$\{A\}$}       (3)
                (3) edge [loop below]  node        {$\top$}       (3);
        \end{tikzpicture}
\caption{\label{fig:dfa2_large} The \ac{dfa} $\calA_2$ for P2's co-safe objective.}
\end{figure}

The following labeling functions are defined for P1 and P2:
\[
L_1(h,c,\turn, \nwstate)=\left\{\begin{array}{ll} 
\{A \} &\text{ if } h= 2 \text{ and }c\ge 1,\\
\{B\} &\text{ if }h = 6 \text{ and } c\ge 1,\\
  \{\decoys\} & \text{ if }h =4 \text{ and } c\ge 1,\\
        \emptyset &   \text{ otherwise.}\end{array}\right.
\]
\[
L_2(h,c,\turn, \nwstate)=\left\{\begin{array}{ll} 
\{A \} &\text{ if } h= 2, 4 \text{ and }c\ge 1,\\
\{B\} &\text{ if }h =6 \text{ and } c\ge 1,\\
        \emptyset &   \text{ otherwise.}\end{array}\right.
\]

In words, the decoy is set to host $4$.  The attacker misperceives the decoy host 4 as one critical
target, with the same label as host 2.

Similar to the previous experiment, we compare the set of states from
which P1 has a winning strategy to achieve his two objectives: safety
$\commspec$ and preferred $\commspec\land \hidspec$. The result is
shown in Table~\ref{tb:compare-complex-specs}. It turns out that with
deception, we see that the deceptive winning region for P1 given either
  objective  is larger than the
winning regions of games in which P2 has no misperception. At the beginning of
this game, the attacker is at host 0 with user access. Each host runs
all available services.  P1 can prevent P2 from achieving her co-safe
objective from the initial state with probability one. However, P1
cannot ensure to force P2 to running into the decoy  without
deception. When P2 misperceives the label of states, no matter P2 uses a sure-winning or an almost-sure winning randomized strategy, P1's deceptive winning regions for both safety and the preferred objectives include the initial state of the game. That is, P1 can prevent P2 from achieving her co-safe objective and force P2 to visit the decoy host $4$.
\begin{table}\centering
\caption{
                                        \label{tb:compare-complex-specs} Comparison P1's winning regions with/without deception.}
      \resizebox{0.75\textwidth}{!}{\begin{minipage}{\textwidth}
                                          \begin{tabular}{c | cc | cc|cc}
\toprule
                                     $\abs{V}$ &      \multicolumn{2}{c}{No Misperception} & \multicolumn{2}{c}{Greedy} &
                                 \multicolumn{2}{c}{Randomized} 
                                 \\
                                              \hline
                                       &   $\win_1 $ & $\win_1^\succ$    &   $\win_1$ &  $\win_1^\succ$  &    
                                     $\win_1$ &  $\win_1^\succ  $   \\
                                               28519 &  17013, win & 10782, lose  & 17170, win &  10964, win & 17013, win   & 10802, win \\
                                          \bottomrule
                                          \end{tabular}
                                        \end{minipage}}
                                    \end{table}
                                        
The code is implemented in python with a MacBook Air with 1.6 GHz dual-core 8th-generation Intel Core i5 Processor and 8GB  memory.  In the first experiment, the game arena (attack graph) is generated in $2.34$ sec and the hypergame transition system is generated in $8.16$ sec. The computation of winning regions and winning strategies for different P1 and P2 objectives took  $0.02$ to $0.04$ sec. In the second experiment, the   game arena (attack graph) is generated in $70.2$ sec and the hypergame transition system is generated in $854.83$ sec. The computation of winning regions and winning strategies for different P1 and P2 objectives took $3$ to $10$ sec.

\section{Conclusion}
\label{sec:conclude}


In this paper, we have developed the theory and solutions of $\omega$-regular hypergame and algorithms for synthesizing provably correct defense policies with active cyberdeception. Building on attack graph analysis in formal verification, we have shown that by modeling the active deception as an $\omega$-regular game in which the attacker misperceives the labeling function, the solution of such games can be used effectively to design deceptive security strategies under complex security requirements in temporal logic. We introduced a set-based strategy that approximates attacker's all possible rational decisions in her perceptual game. The defender's deceptive strategy against this set-based strategy, if exists, can ensure the security properties to be satisfied with probability one. We have experiments with synthetic network systems to verify the correctness of the policy and the advantage of deception. Our result is the first to integrate formal synthesis for $\omega$-regular games in designing secure-by-synthesis network systems using attack graph modeling.

The modeling and solution approach can be further generalized to consider  partial observations during interaction, concurrent interactions, and multiple target attacks from the attacker. The mechanism design problem in such a game is to be investigated for resource allocation in the network. Further, we will investigate methods to improve the scalability of the solution. One approach for tractable synthesis for such games is to use hierarchical aggregation in attack graphs \cite{noelManagingAttackGraph2004}. From formal synthesis, it is also possible to improve the scalability of algorithm with compositional synthesis \cite{Filiot2011,kulkarni2018compositional} and abstraction methods \cite{clarke2000counterexample} of $\omega$-regular games.

\bibliography{refs}{}
\bibliographystyle{plain}

\appendix
\noindent \paragraph{Proof of Lemma~\ref{lma:determinism}}
By way of contradiction. If there exists another $\sigma'' \ne \sigma'$ such that  $\delta(q_2,\sigma'')=q_2''$ and $q_2''\ne q_2'$ and  $\mask(\sigma)=\mask(\sigma'')$. Given $\mask(\sigma')=\mask(\sigma'')=\mask(\sigma)$, P2 cannot distinguish $\sigma'$ from $\sigma''$--that is, $\sigma'=\sigma''$ in $\calA_2$. Thus,  $q_2'\ne q_2''$ contradicts the fact that $\calA_2$ is deterministic in the product automaton.
 
\noindent \paragraph{Algorithms for solving   co-safe/reachability and safety games}
Consider a two-player turn-based game arena $\game=(S= S_1\cup S_2,A = A_1\cup A_2, T)$ where for $i=1,2$, $S_i$ is a set of states where player $i$  takes an action, $A_i$ is a set of actions for player $i$, and $T:S\times A\rightarrow S$ is the transition function. 

We first define two functions:
\[
\pre_i^\exists (X) = \{ s\in S_i \mid \exists a \in A_i, \text{such that } T(s,a)\in X\};\]
which is a set of states from which player $i$ has an action to ensure reaching the set $X$ in one step.
\[
\pre_i^\forall (X) = \{ s\in S_i \mid \forall a \in A_i, \text{such that } T(s,a)\in X\};\]
which is a set of states from which all actions of player $i$ will lead to a state within $X$. 

Algorithm~\ref{alg:zielonka} is Zielonka's attractor algorithm, for solving player 1's winning region given a reachability objective. 
Algorithm~\ref{alg:safe} is the solution for turn-based games with safety objective for player 1. 
\begin{algorithm}
\caption{\label{alg:zielonka}Almost-Sure Winning for player 1 given the winning condition $(F,\cosafe)$ (reaching $F$).}
\begin{algorithmic}[1]
\item[\textbf{Inputs:}] $\left(\game=(X= X_1\cup X_2, A_1\cup A_2, T), F\subseteq S\right)$. 
    \State $Z_0 = F$
    \While{True} 
        \State $Z_{k+1} = Z_k \cup \pre^\exists_1(Z_k) \cup \pre^\forall_2(Z_k)$
        \If {$Z_{k+1} = Z_k$}
          \State $\win_1 \leftarrow Z_k$;
            \State End Loop
        \EndIf
        \State $k\leftarrow k+1$
    \EndWhile
      \For{$j = 1,\ldots, k$}
 \For{$x\in Z_j\cap X_1$}
   \State $\pi_1(x) = a \text{ if } T(x,a)\in Z_{j-1}$. \Comment{a sure-winning action from state $x$.}
   \EndFor
   \EndFor
    \State \Return $\win_1, \pi_1$. \Comment{Return the winning region and  strategy for player 1.}
\end{algorithmic}
\end{algorithm}
\begin{algorithm}
\caption{Almost-Sure Winning for player 1 given the winning condition $(B,\safe)$ (staying in $B$)  }
\label{alg:safe}
\begin{algorithmic}[1]
\item[\textbf{Inputs:}]  $\left(\game=(X= X_1\cup X_2, A_1\cup A_2, T), B \subseteq S\right)$. 
    \State  $Z_0= B$.
    \While{True}
    \State $Y = S\setminus Z_i$;  \Comment{$Y$ is a set of unsafe states.}
        \State $Z_{i+1} \leftarrow Z_i \setminus \left(Y\cup (\pre_{1}^\forall (Y) \cup \pre_2^\exists (Y))\right)$;
        \\
        \Comment{remove unsafe states from the set $Z_i$.}
        \If {$Z_{i+1} = Z_i$}
        \State $\win_1 \leftarrow Z_i$;
            \State End Loop
        \EndIf
        \State $i\leftarrow i+1$;
    \EndWhile
   \For{$x\in \win_1\cap X_1$}
   \State $\pi_1(x) =\{a\mid T(x,a)\in \win_1\}$;
   \EndFor
    \State \Return $\win_1, \pi_1$. \Comment{Return the winning region and  strategy for player 1.}
\end{algorithmic}
\end{algorithm}

Figure~\ref{fig:arena_simple} is the game arena for experiment 1. The set of states where the attacker makes a move is shown in grey box. The rest are the defender's states. Since the transition is deterministic, we omit the labels on the transitions.
\begin{figure*}
\includegraphics[width=\textwidth]{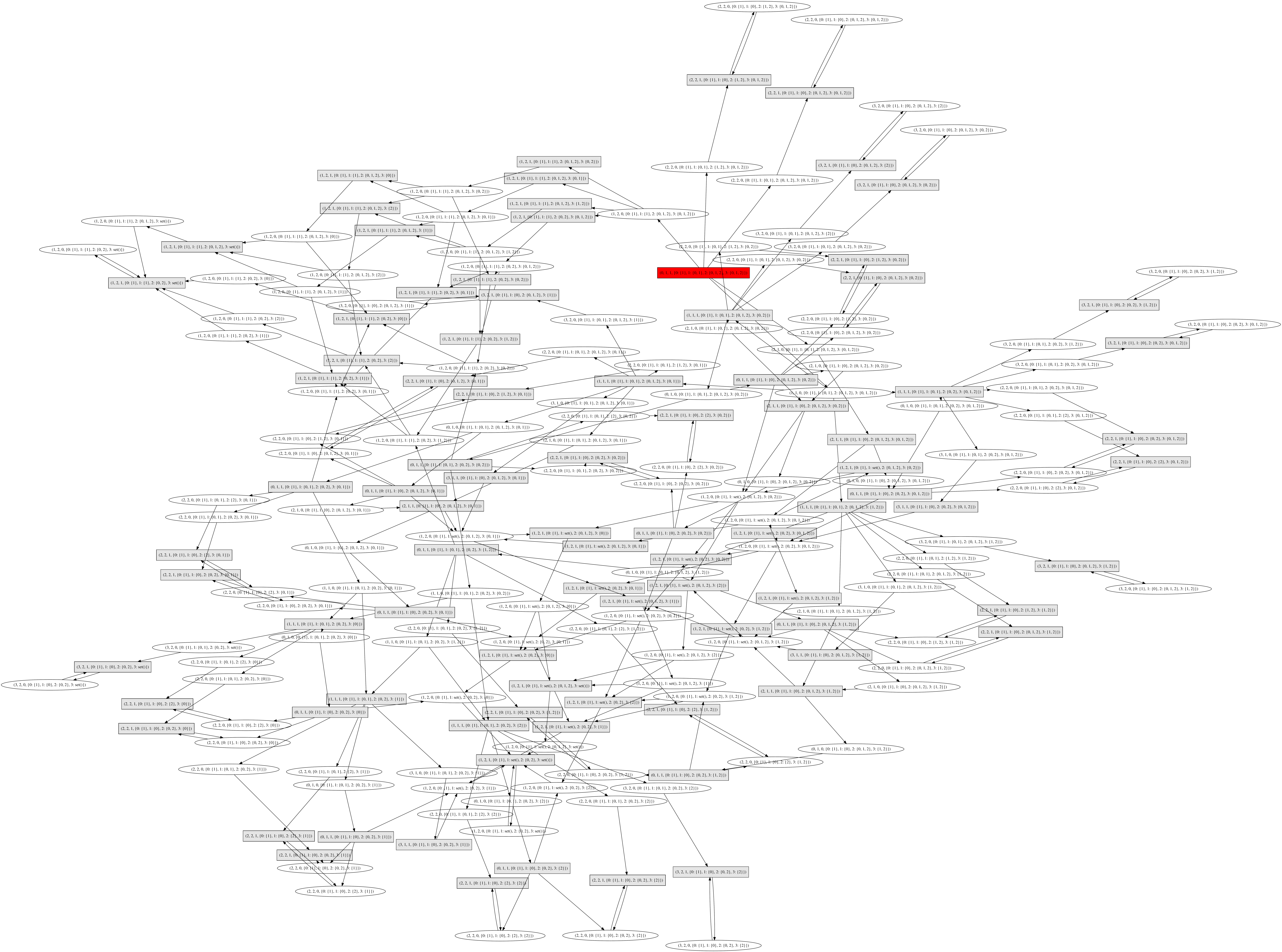}
\caption{The game arena of Experiment 1. \label{fig:arena_simple}}
\end{figure*}

Figure~\ref{fig:game_win} is the hypergame transition system given P1's sure winning strategy and P2's perceived winning strategy in experiment 1. 

\begin{figure*}[ht]
\centering
\includegraphics[angle=-90,scale=0.08]{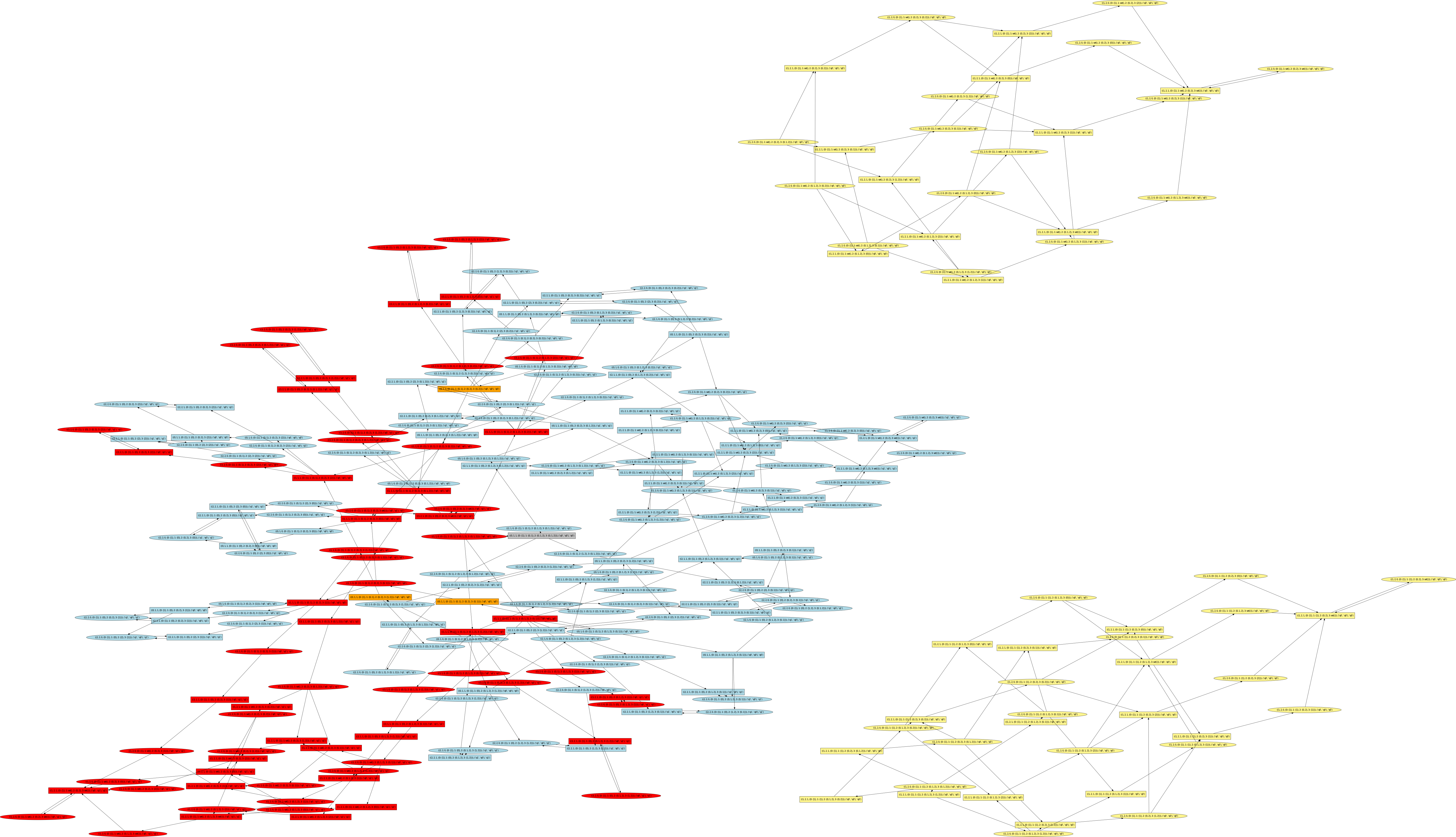}
\caption{The winning regions of two players in Experiment 1. \label{fig:game_win}}
\end{figure*}

\end{document}